\def\BibTeX{{\rm B\kern-.05em{\sc i\kern-.025em b}\kern-.08em
		T\kern-.1667em\lower.7ex\hbox{E}\kern-.125emX}}
\tikzset{
        accepting/.style={
            double distance=1mm,
            outer sep=1pt
        }
    }
\tikzset{initial text={}}
\tikzset{every picture/.style={>=angle 60}}
\tikzstyle{ran}=[rounded corners,thick,draw,minimum size=1.4em,inner sep=.5ex]
\tikzstyle{tran}=[thick,draw,->]
\newtheorem{theorem}{Theorem}
\newtheorem{proposition}[theorem]{Proposition}
\newtheorem{example}[theorem]{Example}
\theoremstyle{definition}
\newtheorem{definition}[theorem]{Definition}
\theoremstyle{remark}
\theoremstyle{plain}
\pgfplotsset{compat=1.18} 
\begin{document}
\title{Probabilistic Bisimulation for Parameterized Anonymity and Uniformity Verification}
\author{
    Chih-Duo Hong, Anthony W.~Lin, Philipp Rümmer, Rupak Majumdar
    \thanks{
    Chih-Duo Hong is with the National Chengchi University, Taipei, Taiwan.
    Anthony W.~Lin is with the University of Kaiserslautern-Landau, Kaiserslautern, Germany, and the Max Planck Institute for Software Systems, Kaiserslautern, Germany.
    Philipp Rümmer is with the University of Regensburg, Regensburg, Germany, and Uppsala University, Sweden.
    Rupak Majumdar is with the Max Planck Institute for Software Systems, Kaiserslautern, Germany.
    (Corresponding author: Chih-Duo Hong)
    }
    \thanks{Chih-Duo Hong is supported by the National Science and Technology Council, Taiwan, under grant number NSTC112-2222-E-004-001-MY3. Anthony Lin is supported by the European Research Council under the European Union's Horizon 2020 research and innovation programme under number 101089343. Philipp R\"ummer is supported by the Swedish Research Council through grant 2021-06327.}
}
\maketitle
\begin{abstract}
	Bisimulation is crucial for verifying process equivalence in probabilistic systems. This paper presents a novel logical framework for analyzing bisimulation in probabilistic parameterized systems, namely, infinite families of finite-state probabilistic systems.
	Our framework is built upon the first-order theory of regular structures, which provides a decidable logic for reasoning about these systems.
	We show that essential properties like anonymity and uniformity can be encoded and verified within this framework in a manner aligning with the principles of deductive software verification, where systems, properties, and proofs are expressed in a unified decidable logic.
    By integrating language inference techniques, we achieve full automation in synthesizing candidate bisimulation proofs for anonymity and uniformity.
    We demonstrate the efficacy of our approach by addressing several challenging examples, including cryptographic protocols and randomized algorithms that were previously beyond the reach of fully automated methods.
\end{abstract}
\newcommand{\OMIT}[1]{}
\newcommand{\NEW}[1]{#1}
\newcommand{\DEL}[1]{\sout{#1}}
\newcommand{\constN}{\ensuremath{\mathsf{n}}}
\newcommand{\rev}[1]{\ensuremath{#1^{-1}}}
\newcommand{\citep}[1]{\cite{#1}}
\newcommand{\citealt}[1]{\cite{#1}}
\newcommand{\citeeg}[1]{\cite{#1}}
\newcommand{\citecf}[1]{\cite{#1}}
\newcommand{\size}[1]{\ensuremath{\lvert #1 \rvert}}
\newcommand{\op}[1]{\operatorname{#1}}
\newcommand{\const}[1]{{\sf #1}}
\newcommand{\iset}[1]{\langle #1 \rangle}
\newcommand{\pred}[1]{\mathbb{#1}}
\newcommand{\NatInt}[2]{\ensuremath{\{#1,\ldots,#2\}}}
\renewcommand{\mod}[2]{\ensuremath{#1~{\rm mod}~#2}}
\newcommand{\floor}[1]{\lfloor #1 \rfloor}
\newcommand{\ceil}[1]{\lceil #1 \rceil}
\newcommand{\lowerbnd}[1]{\lfloor{#1}\rfloor}
\newcommand{\upperbnd}[1]{\lceil{#1}\rceil}
\newcommand{\norm}[1]{\lVert#1\rVert}
\newcommand{\nonneg}[1]{#1 \ge 1}
\newcommand{\voc}{V}
\renewcommand{\AP}{\mathsf{AP}}
\newcommand{\ACT}{\mathsf{ACT}}
\newcommand{\nondet}[2]{\mathsf{ndet}(#1,#2)}
\newcommand{\regularWTS}{\iset{S, P, \{r_a \}_{a\in\ACT}}}
\newcommand{\WTS}{\iset{S, P, A, \delta, +}}
\newcommand{\WTSS}{\iset{S, P, \delta}}
\newcommand{\LTS}{\iset{S, A, \delta}}
\newcommand{\TranSystemTriple}{\voc, \phi_{init}, \phi_{tran}}
\newcommand{\TranSystem}{\ensuremath{(\TranSystemTriple)}}
\newcommand{\TranSystemSafe}{\ensuremath{(\TranSystemTriple, \phi_{bad})}}
\newcommand{\TranSystemFair}{\ensuremath{(\TranSystemTriple, \phi_{fair})}}
\newcommand{\UNIV}{\ensuremath{\mathrm{FO}_\mathsf{reg}}}
\newcommand{\args}[1]{\overline{#1}}
\newcommand{\ialphabet}{\ensuremath{\Sigma}}
\newcommand{\struct}{\mathfrak{S}}
\newcommand{\sem}[1]{[\![#1 ]\!]}
\renewcommand{\FO}[1]{\mathrm{FO}(#1)}
\newcommand{\MSO}[1]{\mathrm{MSO}(#1)}
\newcommand{\WMSO}[1]{\mathrm{WMSO}(#1)}
\newcommand{\zero}{\mathtt{0}}
\newcommand{\one}{\mathtt{1}}
\newcommand{\true}{\mathrm{true}}
\newcommand{\false}{\mathrm{false}}
\newcommand{\subst}[3]{#1\, [#2 / #3]}
\newcommand{\eqlen}{\ensuremath{eqL}}
\newcommand{\univ}{\mathfrak{U}}
\newcommand{\univDesc}{\langle \ialphabet^*\!,\: \preceq,\: {\eqlen,}\: \{\prec_a\}_{a \in \ialphabet}\rangle}
\renewcommand{\lang}[1]{\ensuremath{\mathcal{L}(#1)}}
\newcommand{\rel}[1]{\sem{#1}}
\newcommand{\allows}{\,\rhd\,}
\newcommand{\X}{\mathsf{X}}
\newcommand{\Y}{\mathsf{Y}}
\newcommand{\Z}{\mathsf{Z}}
\newcommand{\abs}[1]{#1}
\makeatletter
\DeclareRobustCommand{\circle}{\mathord{\mathpalette\is@circle\relax}}
\newcommand\is@circle[2]{
    \begingroup
    \sbox\z@{\raisebox{\depth}{$\m@th#1\bigcirc$}}
    \sbox\tw@{$#1\square$}
    \resizebox{!}{\ht\tw@}{\usebox{\z@}}
    \endgroup
}
\makeatother
\newcommand{\Mydiamond}[1]{\ensuremath{\langle #1 \rangle}}
\renewcommand{\G}{\square \,}
\newcommand{\F}{\Diamond \,}
\newcommand{\GF}{\square \Diamond \,}
\newcommand{\FG}{\Diamond \square \,}
\newcommand{\Until}{\,\mathit{U}\,}
\newcommand{\Next}{\circle\,}
\newcommand{\Nat}{\mathbb{N}}
\newcommand{\Real}{\mathbb{R}}
\newcommand{\Rat}{\mathbb{Q}}
\newcommand{\Int}{{\mathbb{Z}}}
\newcommand{\T}{\mathsf{T}}
\newcommand{\N}{\mathsf{N}}
\newcommand{\blank}{\mathtt{\#}}
\newcommand{\structA}{\mathfrak{A}}
\newcommand{\structB}{\mathfrak{B}}
\newcommand{\structW}{\mathfrak{W}}
\newcommand{\structT}{\mathfrak{T}}
\newcommand{\structM}{\mathfrak{M}}
\newcommand{\structR}{\mathfrak{R}}
\newcommand{\structH}{\mathfrak{H}}
\newcommand{\bbA}{\mathbb{A}}
\newcommand{\bbE}{\mathbb{E}}
\newcommand{\bbD}{\mathbb{D}}
\newcommand{\bbP}{\mathbb{P}}
\newcommand{\bbQ}{\mathbb{Q}}
\newcommand{\bbI}{\mathbb{I}}
\newcommand{\bbR}{\mathbb{R}}
\newcommand{\bbS}{\mathbb{S}}
\newcommand{\bbZ}{\mathbb{Z}}
\newcommand{\va}{\vec{a}}
\newcommand{\vb}{\vec{b}}
\newcommand{\vc}{\vec{c}}
\newcommand{\vf}{\vec{f}}
\newcommand{\vu}{\vec{u}}
\newcommand{\vv}{\vec{v}}
\newcommand{\vw}{\vec{w}}
\newcommand{\vx}{\vec{x}}
\newcommand{\vy}{\vec{y}}
\newcommand{\vz}{\vec{z}}
\newcommand{\cA}{\mathcal{A}}
\newcommand{\cB}{\mathcal{B}}
\newcommand{\cC}{\mathcal{C}}
\newcommand{\cD}{\mathcal{D}}
\newcommand{\cI}{\mathcal{I}}
\newcommand{\cJ}{\mathcal{J}}
\newcommand{\cE}{\mathcal{E}}
\newcommand{\cF}{\mathcal{F}}
\newcommand{\cG}{\mathcal{G}}
\newcommand{\cV}{\mathcal{V}}
\newcommand{\cM}{\mathcal{M}}
\newcommand{\cQ}{\mathcal{Q}}
\newcommand{\cR}{\mathcal{R}}
\newcommand{\cT}{\mathcal{T}}
\renewcommand{\cP}{\mathcal{P}}
\renewcommand{\cL}{\mathcal{L}}
\renewcommand{\cS}{\mathcal{S}}
\newcommand{\mT}{\mathfrak{T}}
\newcommand{\mS}{\mathfrak{S}}
\newcommand{\mU}{\mathfrak{U}}
\newcommand{\langA}{A}
\newcommand{\langB}{B}
\newcommand{\langT}{T}
\newcommand{\langI}{I}
\newcommand{\langR}{R}
\newcommand{\langE}{Z}
\newcommand{\langL}{L}
\newcommand{\concat}{\cdot}
\newcommand{\+}{\ensuremath{\!\cdot\!}}
\newcommand{\ModelRun}{\ensuremath{\pi}}
\newcommand{\empseq}{\ensuremath{\varepsilon}}
\newcommand{\tran}[1]{\ensuremath{\stackrel{#1}{\longrightarrow}}}
\newcommand{\MEM}{\mathit{Mem}}
\newcommand{\EQ}{\mathit{Equ}}
\newcommand{\Fp}{F_Y}
\newcommand{\myvec}[1]{
\begin{bmatrix}
	#1
\end{bmatrix}
}
\newcommand{\toss}{{\sf toss}}
\newcommand{\tail}{{\sf tail}}
\newcommand{\head}{{\sf head}}
\newcommand{\reset}{{\sf \bot}}
\newcommand{\libalf}[0]{LibAlf}
\newcommand{\eval}[1]{\llbracket#1\rrbracket}
\def\qed {{
        \parfillskip=0pt
        \widowpenalty=10000
        \displaywidowpenalty=10000
        \finalhyphendemerits=0
        \leavevmode
        \unskip
        \nobreak
        \hfil
        \penalty50
        \hskip.2em
        \null
        \hfill
        \qedsymbol\kern-.6pt
        \par}}
\def\qedsymbol{$\square$}

\section{Introduction}
\label{sec:intro}
Formal verification techniques are essential for developing reliable and sustainable software systems.
Among these techniques, bisimulation equivalence provides a powerful means of establishing that two systems exhibit indistinguishable observable behaviors and meet the same requirements specified in expressive modal logic
\cite{milner1980calculus,milner1989communication}.
In software model checking, bisimulation enables proving a system's correctness by showing that the system mirrors the specification step by step \cite{holzmann2004model,banerjee2016deriving}.
In applications involving secrecy and privacy, bisimulation can establish non-leakage of sensitive information
by showing that a potential attacker cannot distinguish between the real protocol's behavior and the idealized confidential behavior \citep{gancher2023core,ramanathan2004probabilistic,castiglioni2020logical}.
Other applications of bisimulation equivalence include information flow analysis \citep{noroozi2019bisimulation,dong2021refinement,salehi2022automated},
knowledge reasoning \citep{aagotnes2023quantifying,tran2014bisimulation,fervari2022bisimulations},
runtime verification \citeeg{scott2023trustworthy},
and computational optimization problems of finite-state automata \citep{bonchi2013checking,kuperberg2021coinductive,jacobs2023fast}
and probabilistic systems \citep{dehnert2013smt,lin2024bisimulation,drakulic2024bq,sproston2006backward}.

Due to its rich applications, the problem of checking bisimulation equivalence has been extensively studied.
This problem is decidable for both probabilistic and nondeterministic \emph{finite-state} systems
\cite{Baier96,valmari2010simple,CBW12}.
For infinite-state systems, such as communication protocols involving an unbounded number of processes,
the problem is generally undecidable \citep{moller1996infinite}.
Therefore, research on the automatic verification of bisimulation for infinite-state systems has primarily taken two directions.
The first focuses on developing heuristics and approximation techniques for the general undecidable problems \cite{kumar2006finite,hong2019probabilistic,abate2024bisimulation}.
The second seeks to identify subclasses of infinite-state systems where bisimulation equivalence remains decidable.
For example, studies like \cite{srba2004roadmap,S05,forejt2018game,koutavas2024pushdown} probe into bisimulation equivalence for pushdown systems and their variants in both probabilistic and non-probabilistic settings.
As noted by Garavel and Lang \cite{garavel2022equivalence}, most research in infinite-state bisimulation has leaned towards theoretical work over developing practical tools.

In this paper, we introduce a first-order framework for reasoning about bisimulation equivalence in probabilistic infinite-state systems. Drawing from recent advances in deductive verification \citeeg{ahrendt2016deductive,hahnle2019deductive,furia2014loop}, we represent these systems in a decidable theory that enables a uniform formalization of the target system, its correctness properties, and the proofs of these properties.
Our key contribution is leveraging the first-order theory of \emph{regular structures} \cite{blumensath2004finite,colcombet2007transforming,lin2022regular} to develop decidable proof rules for probabilistic bisimulation: \NEW{given a binary relation $R$ encoded in this theory, our rules generate verification conditions determining whether $R$ constitutes a probabilistic bisimulation, which can be automatically checked using theorem provers and constraint solvers.}
We showcase this approach's efficacy by examining two essential properties in cryptographic protocols and randomized algorithms: anonymity and uniformity.
    \emph{Anonymity} protects the identities of participants in a protocol. An anonymous protocol guarantees that external observers cannot identify the individuals involved in the interactions, thereby preserving participant confidentiality \cite{beauxis2009probabilistic,bhargava2005probabilistic,halpern2005anonymity}.
    \emph{Uniformity} focuses on the output distribution of a randomized protocol or algorithm, ensuring that the results are evenly distributed across a specified range \cite{barthe2017proving,barthe2020foundations}.

To illustrate the concepts of anonymity and uniformity, consider the \emph{dining cryptographers protocol}. \cite{chaum1988dining}.
This protocol involves $n\ge 3$ participants on a ring, each holding a secret bit.
Let $x_i$ denote the secret bit held by participant $i$ for $i \in \NatInt{0}{n-1}$.
Define the \emph{parity} of these bits as $f(\args{x}) := x_0 \oplus \cdots \oplus x_{n-1}$, where $\oplus$ stands for xor.
The participants aim to compute $f(\args{x})$ without revealing information about the individual bits.
To achieve this, they execute the computation in two stages. (As the participants are arranged on a ring of length $n$, all index arithmetic in the sequel is performed modulo $n$.)
First, each pair of adjacent participants $i$ and $i+1$ computes a random bit $b_i$ that is visible only to them.
Then, each participant $i$ announces $a_i := x_i \oplus b_i \oplus b_{i-1}$ to all participants,
allowing them to compute $f(\args{x})$ since $f(\args{a}) = f(\args{x})$.
The \emph{anonymity property} of the protocol asserts that
any observing participant, say participant $k$, cannot infer the other participants' secrets from observed information,
including her own secret $x_k$, the random bits $b_k$ and $b_{k-1}$, and the announcements $a_0, \ldots, a_{n-1}$.
To establish this property, we may show that the probability distribution of the announcements $a_0, \ldots, a_{n-1}$
is solely determined by the values of $x_k$, $b_{{k}}$,  $b_{{k}-1}$, and $f(\args{x})$. Consequently, no information beyond these values can be inferred from the announcements. If we model the dining cryptographers protocol as a Markov decision process (MDP),
we can verify the anonymity property by showing that any two initial states of the process, say, one with secrets $\args{x}$ and the other with secrets $\args{y}$, are bisimilar as long as $x_k = y_k$ and $f(\args{x}) = f(\args{y})$. Alternatively, we can prove anonymity by showing that, starting from an initial state with secrets $\args{x}$, the announcements observed by participant $k$ are uniformly distributed over
\[
    \{~\args{a} \in \{0,1\}^n : a_k = x_k \oplus b_k \oplus b_{k-1},~f(\args{a}) = f(\args{x})~\},
\]
which indicates that no information beyond the values of $x_k$, $b_k$, $b_{k-1}$, and $f(x)$ can be inferred from the value of $\args{a}$.
Requirements like this are called \emph{uniformity properties}. We provide more details about these properties in Section~\ref{sec:anonymity-examples}.

\NEW{We show that anonymity and uniformity properties can be model checked in a unified manner for a class of infinite-state systems called parameterized systems. A \emph{parameterized system} $\{P_n\}_{n\in N}$
comprises an infinite family of finite-state systems indexed by a parameter $n \in N$
\cite{abdulla2018model,kourtis2024parameterized,lin2016liveness}.
A parameterized system satisfies a property if every instance $P_n$ of the system satisfies that property.} For example, the dining cryptographers protocol is anonymous if it is anonymous for any $n \ge 3$ cryptographers.
\NEW{In this work, we focus on parameterized systems that can be represented within the first-order theory of regular structures, which is expressive enough to capture a range of examples in the literature.} Specifically, we examine the (parameterized) dining cryptographers protocol \citep{chaum1988dining}, grades protocol \citep{apex}, crowds protocol \citep{reiter1998crowds}, random walks, random sums,
Knuth-Yao's random number generator \cite{knuth1976complexity}, and Bertrand's ballot theorem \cite{feller1991introduction}.
Drawing inspiration from software verification, where safety proofs are typically separated into proof rules and inductive invariant synthesis, our proofs for anonymity and uniformity similarly distinguish between proof rules and bisimulation synthesis. We show that
recent advancements in proof generation and refinement \cite{chen2017learning,markgraf2020parameterized,garg2014ice} can be utilized to search for candidate proofs effectively, making our verification procedure truly ``push-button''.

\emph{Contributions.}
The main contributions of this paper are as follows.
Firstly, we propose to employ bisimulation equivalence as a unified framework for model checking anonymity and uniformity --- two properties that have traditionally been addressed with distinct techniques in infinite-state settings.

\NEW{Secondly, we demonstrate that probabilistic systems satisfying the \emph{minimal deviation assumption} \cite{larsen1991bisimulation} (i.e., all transition probabilities are multiples of some $\varepsilon > 0$) can be faithfully encoded in the first-order theory of regular structures. This theory has been previously applied to reason about \emph{qualitative} liveness of MDPs \cite{lin2016liveness}, where the actual probability values are abstracted away. To our knowledge, this work offers the first quantitative encoding of probabilistic systems in the theory.}

Thirdly, \NEW{assuming minimal deviation},
we encode the verification conditions for \NEW{regular} probabilistic bisimulation relations in the aforementioned theory, yielding an algorithmic approach to checking anonymity and uniformity properties.
Indeed, since this theory is syntactically reducible to the weak monadic second-order logic with one successor (WS1S)
\cite{colcombet2007transforming}, our encodings can be manipulated and analyzed using highly optimized tools like \textsc{Mona} \citep{klarlund2001mona} and \textsc{Gaston} \citep{fiedor2017lazy}.

Thus far, our framework has relied on user-provided proofs for verifying anonymity and uniformity.
Our final contribution is demonstrating how language inference algorithms \cite{angluin1987learning,rivest:inference1993,kearns:introduction1994}
can be leveraged to generalize proofs derived from finite system instances to establish correctness for the entire parameterized system. Thanks to these techniques, we successfully verified challenging examples that were previously beyond the reach of fully automated approaches.

This paper is a significant extension of the conference paper \citep{hong2019probabilistic}, which focused solely on the anonymity verification of parameterized MDPs. This current work expands the previous formalism and results by providing a unified logical framework capable of reasoning about both anonymity and uniformity properties. We also offer new case studies to demonstrate the effectiveness of this extended approach.

\section{Preliminaries}
\subsection{Weighted transition system}
A \emph{weighted transition system (WTS)} is a three-sorted structure
$\struct \coloneqq \langle S,P,A,\delta, + \rangle$, where $S$ is a countable set of configurations, $P$ is a countable set of nonnegative numbers, $A$ is a countable set of actions, $+ : P \times P \to P$ defines the addition operation over $P$, and $\delta: S \times A \times S \to P$ is called a \emph{weighted transition function}.
We assume that the elements in $S,P,A$ are \emph{named} for each element $e$, we can effectively find a constant symbol $c_e$ in $\struct$ that is interpreted to $e$.
For configurations $s,t \in S$, we use $\delta(s,a,t)$ to denote the transition weight from $s$ to $t$ via action $a$. \NEW{When $\delta(s,a,t) \neq 0$, we write $s \rightarrow_{a} t$ to indicate that $s$ can move to a successor $t$ in one step.}
A finite or infinite sequence $\pi \coloneqq s_0 \rightarrow_{a_1} s_1 \rightarrow_{a_2} s_2 \rightarrow_{a_3} \cdots$ is called a \emph{path}.
A configuration $s$ is \emph{reachable} from $s'$ if there is a path from $s'$ to $s$. A WTS is~\emph{bounded branching} if for every action, the system can only reach a bounded number of configurations in one step. Namely, there exists a universal bound $b<\infty$ such that for each $s \in S$ and $a \in A$, $|\{ t \in S : s \to_{a} t \}| \leq b$. Since a WTS can have infinitely many actions, a configuration may still have an unbounded number of successors even though the system is bounded branching.

\NEW{
A WTS $\struct \coloneqq \WTS$ is called a \emph{Markov chain} if there is a constant $q\in P$ such that $\sum_{a \in A} \sum_{t \in S} \delta(s,a,t) = q$ for each $s \in S$.
$\struct$ is a \emph{Markov decision process (MDP)} if there is a constant $q\in P$ such that $\sum_{t \in S} \delta(s,a,t) \in \{0, q\}$ holds for each $s \in S$, $a \in A$.}
\NEW{
In both cases, we interpret $\delta(s,a,t)$ as the transition probability
$\delta(s,a,t) / q$ of the transition $s \to_{a} t$. A Markov chain emits an action after it determines a transition. In contrast, a Markov decision process needs to select an action before making a transition. If the process selects action $a$ in state $s$, it moves to a state $t$ with probability $\delta(s,a,t) / q$.
Note that a Markov chain has no terminal state by definition, whilst a Markov decision process can get stuck
by selecting an action $a$ in a state $s$ such that $\sum_{t \in S} \delta(s,a,t) = 0$.
}

\NEW{
In \cite{larsen1991bisimulation},
Larsen and Skou introduced the \emph{minimal deviation assumption}, restricting a probabilistic system such that all transition probabilities share a common divisor.
This assumption holds for most probabilistic systems we encounter, including probabilistic pushdown automata \cite{etessami2009recursive,forejt2018game}, probabilistic parameterized systems \cite{lin2016liveness}, and all case studies in Section \ref{sec:case-study}.
Notably, a probabilistic system with minimal deviation induces a bounded branching WTS with natural weights.
We will adopt this assumption in the sequel and later discuss how to relax it in Section \ref{sec:extensions}.}
\subsection{Bisimulation equivalence}
\label{sec:bisimulation-PML}
Let $\struct \coloneqq \WTS$ be a WTS.
A \emph{bisimulation} over $\struct$ is an equivalence relation $R \subseteq S \times S$ such that $(s,s')\in R$ implies
\begin{align}\label{defn:pb-proof-rule}
    \forall a\in A.\ \forall E \in S/R.\ \sum_{t \in E} \delta(s, a, t) = \sum_{t' \in E} \delta(s', a, t'),
\end{align}
where $S/R$ denotes the set of equivalence classes induced by $R$. Two configurations $s, s'$ of $\struct$ are \emph{bisimilar} if there exists a bisimulation $R$ over $\struct$ such that $(s, s')\in R$. Intuitively, bisimilar configurations emit the same amount of probability mass to the same equivalence class for any action.
The union of all bisimulations over $\struct$ is itself a bisimulation over $\struct$;
this maximal bisimulation is also called \emph{bisimulation equivalence} or \emph{bisimilarity}
\cite{milner1989communication}.

Bisimilarity can be lifted to relate two systems.
Given two WTSs $\struct \coloneqq \WTS$ and $\struct' \coloneqq \langle S', P, A, \delta', +\rangle$
such that $S \cap S' = \emptyset$, we define the~\emph{disjoint union} of $\struct$ and $\struct'$
as the WTS $\struct\uplus\struct' \coloneqq \iset{ S'', P, A, \delta'', +}$ with $S'' \coloneqq S \uplus S'$ and
\[
    \delta''(s,a,t) \coloneqq \begin{cases}
    \ \delta(s,a,t), & s,t \in S \,;\\
    \ \delta'(s,a,t), & s,t \in S'.
    \end{cases}
\]
A binary relation $R$ over $S \uplus S'$ is called a bisimulation between $\struct$ and $\struct'$ if $R$ is a bisimulation over $\struct\uplus\struct'$.

\NEW{\emph{Probabilistic modal logic (PML)} \cite{larsen1991bisimulation}, originally introduced for probabilistic systems, extends classical propositional logic with formulas of the form $\langle a \rangle_p \phi$, where $p$ is a probability. A state satisfies $\langle a \rangle_p \phi$ if it can move to states satisfying $\phi$ with probability exceeding $p$
through an $a$-labeled transition.
Many results of PML can be naturally generalized to WTSs. In this work, we will exploit the following key property of PML.}
\begin{proposition}[\hspace{1sp}\citealt{bianco1995model,clerc2019expressiveness}]
    \label{thm:PML-finite-WTS}
    \NEW{Two configurations of a WTS are bisimilar if and only if they satisfy the same PML formulas.
    Furthermore, it is decidable to check whether a configuration satisfies a PML formula in a finite WTS.}
\end{proposition}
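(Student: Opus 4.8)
The plan is to prove the two assertions separately. For the \emph{logical characterization} of bisimilarity (soundness and completeness of PML), I would follow the classical Hennessy--Milner style argument, adapted to the weighted setting. \emph{Soundness} (bisimilar $\Rightarrow$ logically equivalent) is the easy direction: I would argue by induction on the structure of a PML formula $\phi$ that the set $\sem{\phi}$ of configurations satisfying $\phi$ is a union of $R$-equivalence classes whenever $R$ is a bisimulation; the only nontrivial case is $\langle a\rangle_p\psi$, where the defining condition~\eqref{defn:pb-proof-rule} guarantees that $(s,s')\in R$ forces $\sum_{t\in\sem{\psi}}\delta(s,a,t)=\sum_{t'\in\sem{\psi}}\delta(s',a,t')$ once $\sem{\psi}$ is known (inductively) to be $R$-saturated, hence $s\models\langle a\rangle_p\psi \iff s'\models\langle a\rangle_p\psi$. \emph{Completeness} (logically equivalent $\Rightarrow$ bisimilar) is the substantive direction: I would show that the relation $R^\star := \{(s,s') : s,s' \text{ satisfy the same PML formulas}\}$ is itself a bisimulation. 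It is clearly an equivalence relation, so it suffices to verify~\eqref{defn:pb-proof-rule}. Fixing $(s,s')\in R^\star$, an action $a$, and an equivalence class $E\in S/R^\star$, the goal is $\sum_{t\in E}\delta(s,a,t)=\sum_{t'\in E}\delta(s',a,t')$. The standard move is: for each $R^\star$-class $E$ pick, for every pair of distinct classes $E,F$, a distinguishing formula $\chi_{E,F}$ true on $E$ and false on $F$, and form $\chi_E := \bigwedge_{F\neq E}\chi_{E,F}$, so that $\sem{\chi_E}=E$; then the PML formula $\langle a\rangle_p\chi_E$ lets one read off, and compare across $s$ and $s'$, the exact amount of probability mass sent into $E$. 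Because $s,s'$ agree on all such formulas $\langle a\rangle_p\chi_E$ for all rationals $p$, the two sums must coincide.

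The \textbf{main obstacle} is precisely the step where one forms $\chi_E$ as a \emph{finite} conjunction: in an infinite WTS there may be infinitely many equivalence classes, and an infinite conjunction is not a PML formula. This is where the finite-branching hypothesis built into our WTS setting (via the minimal-deviation assumption, see Section~\ref{sec:bisimulation-PML}) is essential. For a fixed $s$ and $a$, the successors of $s$ under $a$ fall into only finitely many classes $E_1,\dots,E_m$ (bounded branching), and the same is true of the finitely many successors of $s'$; combining the two finite successor sets, it suffices to separate the finitely many classes among $E_1,\dots,E_m,F_1,\dots,F_k$ that actually receive positive mass, for which a finite conjunction of distinguishing formulas does exist. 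One must argue carefully that mass landing \emph{outside} these finitely many classes is zero on \emph{both} sides simultaneously --- using that $\sum_{t\in S/R^\star\text{-classes}}$ is really a finite sum --- so that matching the mass on the relevant classes suffices to establish~\eqref{defn:pb-proof-rule} verbatim. I would also note that distinguishing formulas of the form $\langle a\rangle_p(\cdot)$ need thresholds $p$ ranging over the rationals (or the discrete value set $P$ under minimal deviation), which is unproblematic since PML permits arbitrary rational thresholds.

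For the \emph{decidability} claim, I would restrict to a finite WTS $\struct$. Satisfaction of a PML formula is evaluated bottom-up over the finite parse tree of $\phi$: for atomic propositions and Boolean connectives the satisfaction set is computed directly; for $\langle a\rangle_p\psi$, having already computed $\sem{\psi}\subseteq S$ as a finite set, one computes for each state $s$ the finite sum $\sum_{t\in\sem{\psi}}\delta(s,a,t)$ (all data are finite and the weights are rationals, so this is an exact arithmetic computation) and compares it with $p$. This yields $\sem{\phi}$, and hence decides $s\models\phi$, in time polynomial in $|\struct|$ and $|\phi|$; I expect no difficulty here beyond bookkeeping. Finally I would remark that both parts are essentially folklore, going back to Larsen--Skou~\cite{larsen1991bisimulation} for reactive probabilistic systems, with the generalization to WTSs appearing in~\cite{bianco1995model,clerc2019expressiveness}; the only adaptation needed for our purposes is to spell out the finite-branching usage, since our WTSs have infinitely many actions and infinitely many states.
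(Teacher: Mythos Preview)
The paper does not supply its own proof of this proposition: it is stated as a cited result from \cite{bianco1995model,clerc2019expressiveness} with no accompanying argument, so there is nothing in the paper to compare your proposal against. Your sketch is a faithful rendition of the classical Hennessy--Milner style proof as it appears in those references (originating with Larsen--Skou \cite{larsen1991bisimulation}), and your explicit identification of bounded branching as the hypothesis enabling the finite-conjunction step is exactly right and aligns with the paper's standing minimal-deviation assumption.

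One minor imprecision worth tightening if you write this out in full: when you first define $\chi_E := \bigwedge_{F\neq E}\chi_{E,F}$ and assert $\sem{\chi_E}=E$, that equality holds only if the conjunction ranges over \emph{all} classes $F\neq E$, which is the infinite conjunction you immediately go on to reject. Under the bounded-branching fix you describe, $\sem{\chi_E}$ is merely a superset of $E$ disjoint from the other finitely many classes receiving positive mass from $s$ or $s'$; you already note that the surplus carries zero mass on both sides, so the argument goes through, but the exposition should not claim $\sem{\chi_E}=E$ at any point.
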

\OMIT{
\NEW{Indeed, PML can be translated to \emph{probabilistic computation tree logic (PCTL)} \cite{bianco1995model} in linear time.
PCTL model checking runs in polynomial time on finite-state systems, and the algorithm extends to WTSs with a polynomial overhead. Hence, we can model check PML in polynomial time on finite WTSs.}}
\subsection{A first-order framework for regular relations}
An \emph{alphabet} $\Sigma$ is a finite set of letters, and a \emph{word} is a finite sequence of letters.
We use $\Sigma^*$ to denote the set of words over alphabet $\Sigma$. Note that $\Sigma^*$ contains $\varepsilon$, the empty word.
Given $w \in \Sigma^*$, $\size{w}$ is the length of $w$ and $w[i]$ is the $i$th letter of $w$ for $i \in \NatInt{1}{w}$.
We define $\Sigma^n \coloneqq \{(a_1,\ldots,a_n) : a_1,\ldots,a_n \in \Sigma \}$ as the set of $n$-tuples over $\Sigma$,
and denote $\Sigma_{\blank}$ by $\Sigma \uplus \{\blank\}$,
where $\blank \notin \Sigma$ is the \emph{blank symbol}.
The \emph{convolution} $w_1\otimes \cdots \otimes w_n$
of $n$ words $w_1,\ldots,w_n \in\Sigma^*$ is the word $w \in (\Sigma_{\blank}^n)^*$ satisfying
(i) $\size{w} = \max \{\size{w_1}, \dots, \size{w_n} \}$, and
(ii) $w[i] = (a_1,\ldots,a_n)$ for $i \in \NatInt{1}{\size{w}}$, where
\[
a_k \coloneqq \left\{ \begin{array}{cc}
w_k[i] & \quad \text{$\size{w_k} \geq i$,} \\
\blank   & \quad \text{otherwise.}
\end{array}
\right.
\]
In other words, $w$ is the shortest word
obtained by juxtaposing $w_1, \ldots, w_n$ and padding
the shorter words with the blank symbol $\blank$. For example,
$abb \otimes aa = (a,a) (b,a) (b,\blank) \in (\Sigma^2_\blank)^*$
when $\{a,b\} \subseteq \Sigma$.
Given an $n$-ary relation $R \subseteq (\Sigma^*)^n$ over the words $\Sigma^*$,
we define the \emph{language representation} of $R$ as
\begin{align}
    \lang{R} \coloneqq \{ w_1 \otimes \cdots \otimes w_n : (w_1,\ldots,w_n) \in R\}.
\end{align}
We say that an $n$-ary relation $R \subseteq (\Sigma^*)^n$ is a~\emph{regular relation} if
$\lang{R}\subseteq (\Sigma_{\blank}^n)^*$ is a regular language.
We define a structure $\univ \coloneqq \univDesc$, where $\preceq$ is the prefix-of relation,
$\eqlen$ is the equal-length relation, and $\prec_a$
is the $a$-successor relation. More precisely,
$w \preceq w'$ holds iff there exists $w''\in\Sigma^*$ such that $w\cdot w'' = w'$,
where $\cdot$ denotes word concatenation.
Furthermore, $\eqlen(w,w')$ holds iff $|w| = |w'|$, and
$w \prec_a w'$ holds iff $w\cdot a = w'$.
Given a formula $\phi(x_1,\dots,x_n)$ defined over $\univ$, we use $\sem{\phi}$ to denote the set
$\{ (w_1,\ldots,w_n) : \phi(w_1,\ldots,w_n) \in \FO{\univ} \} \subseteq (\Sigma^*)^n$,
where $\FO{\univ}$ denotes the first-order theory of $\univ$.
The following result establishes a connection between logic and automata.
\begin{proposition}[\hspace{1sp}\citealt{blumensath2004finite,colcombet2007transforming}]
    \label{thm:logic-characterisation-of-regularity}
    For any formula $\phi(x_1,\dots,x_n)$ defined over $\univ$, $\sem{\phi} \subseteq (\Sigma^*)^n$ is a regular relation,
    and we can compute from $\phi$ a finite automaton $\mathcal{A}$ recognizing $\lang{\sem{\phi}}$.
    Conversely, for a regular relation $R \subseteq (\Sigma^*)^n$, suppose that $\mathcal{A}$ is a finite automaton recognizing $\lang{R}$.
    Then we can compute from $\mathcal{A}$ a formula $\phi(x_1,\dots,x_n)$ over $\univ$ such that $\sem{\phi} = R$.
\end{proposition}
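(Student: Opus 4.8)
\emph{Overall shape.} The statement bundles two effective translations, and I would establish each by the standard automaton--logic correspondence, paying close attention to the blank-padding convention built into the convolution operator $\otimes$ (equivalently, to the regular ``well-formedness'' language $W_m \subseteq (\Sigma_\blank^m)^*$ consisting of those words in which, coordinate by coordinate, the symbol $\blank$ once seen is never left).

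\emph{From formulas to automata.} I would argue by structural induction on $\phi$, keeping the invariant that for every subformula $\psi$ with free variables among $x_1,\dots,x_m$ the language $\lang{\sem{\psi}}\subseteq(\Sigma_\blank^m)^*$ is regular and an automaton for it is computable. For the atoms $x_i=x_j$, $x_i\preceq x_j$, $\eqlen(x_i,x_j)$, $x_i\prec_a x_j$ one exhibits a small automaton over $\Sigma_\blank^2$ directly; e.g.\ $x_i\prec_a x_j$ reads pairs $(c,c)$ with $c\in\Sigma$ along the common prefix, then a single $(\blank,a)$, and halts. For the Boolean cases, first lift the two sub-automata to a common set of free variables by \emph{cylindrification} (inverse image under the track-forgetting homomorphism, which preserves regularity); then disjunction is union, conjunction is intersection, and negation is complementation relative to $W_m$. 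For $\exists x_i.\psi$, apply the homomorphism $h\colon\Sigma_\blank^m\to\Sigma_\blank^{m-1}$ deleting the $i$-th coordinate, strip any trailing all-$\blank$ columns this creates, and intersect with $W_{m-1}$ to restore well-formedness. All these operations are effective and preserve regularity, so $\sem{\phi}$ is a regular relation and the automaton $\mathcal{A}$ is obtained.

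\emph{From automata to formulas.} Given $\mathcal{A}=(Q,\Sigma_\blank^n,q_0,{\to},F)$ with $\lang{\mathcal{A}}=\lang{R}$, I would write an $\FO{\univ}$ formula $\phi(x_1,\dots,x_n)$ asserting the existence of an accepting run of $\mathcal{A}$ on $x_1\otimes\cdots\otimes x_n$. The first step is to set up, as $\FO{\univ}$-definable auxiliary predicates, comparison of word lengths (using $\eqlen$ together with one-letter extensions), ``$p$ is a position of the convolution'' (i.e.\ $p\preceq x_k$ for some $k$), and ``track $k$ of the convolution carries letter $a$ at position $p$'' (either $|p|<|x_k|$ and the length-$(|p|{+}1)$ prefix of $x_k$ is a $\prec_a$-successor of the length-$|p|$ prefix, or $|p|\ge|x_k|$ and $a=\blank$). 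A run is a colouring of positions by states; to quantify over it in first-order logic I would fix two distinct letters of $\Sigma$ (treating $|\Sigma|=1$ separately, where $\univ$ collapses to an ordering of $\{a\}^*$ and regular relations are semilinear, so the encoding is routine) and represent the run by one word variable $z_q$ per state $q\in Q$, all of the common length one more than $\max_k|x_k|$, with $z_q$ marking the prefixes after which the run sits in state $q$. The formula conjoins: the $z_q$ have the right common length; at each position exactly one $z_q$ carries a $1$; $z_{q_0}$ carries $1$ at the empty prefix; some $z_q$ with $q\in F$ carries $1$ at the last position; and for every pair of consecutive positions the two marked states together with the convolution letter read between them form a transition of $\to$. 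Each conjunct is first-order over $\univ$ via the auxiliary predicates, only $|Q|$ many variables $z_q$ are quantified, and $\sem{\phi}=R$ by construction.

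\emph{Where the work is.} The forward direction is conceptually routine; its only friction is the bookkeeping forced by $\otimes$, namely cylindrification for the Booleans and re-normalisation after projection. The substantive point is in the converse: the existential quantification over a run --- morally a monadic second-order object --- can be simulated inside first-order logic over $\univ$ because a single word over a two-letter sub-alphabet of $\Sigma$ already encodes an arbitrary set of bounded positions; one could equivalently route through the Büchi--Elgot correspondence (weak MSO over word positions equals regularity) and then eliminate the set quantifiers in favour of such word variables. The most delicate part is verifying that length comparison, the ``letter at a given position'' predicate, and the transition check are genuinely first-order expressible over $\univ$, but this is elementary.
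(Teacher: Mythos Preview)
The paper does not prove this proposition: it is stated with citations to Blumensath--Gr\"adel and Colcombet and used as a black box. Your argument is essentially the standard proof from those references, and for $|\Sigma|\ge 2$ it is correct. The forward direction is the usual induction on formulas with cylindrification and projection; the backward direction is the usual run-encoding trick, exploiting that over $\univ$ a first-order word variable over a two-letter sub-alphabet can play the role of a monadic second-order set variable over positions.

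One small correction concerning your parenthetical on $|\Sigma|=1$. You write that in that case ``regular relations are semilinear, so the encoding is routine.'' Regular $n$-ary relations over a unary alphabet are indeed semilinear, but semilinear sets are the Presburger-definable sets, i.e.\ $\mathrm{FO}(\mathbb{N},+)$, not $\mathrm{FO}(\mathbb{N},\le,S)$. When $|\Sigma|=1$ the structure $\univ$ collapses to $(\mathbb{N},\le,S)$, whose first-order theory cannot express periodicity: for instance the regular binary relation $\{(n,m):n\equiv m\pmod 2\}$ is not $\FO{\univ}$-definable in that case. So the converse direction of the proposition genuinely fails for a one-letter alphabet, and the cited sources take $|\Sigma|\ge 2$ as a standing hypothesis. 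This does not affect the paper, which works over nontrivial alphabets throughout, but your aside should either assume $|\Sigma|\ge 2$ from the outset or note that the unary case is out of scope rather than ``routine.''
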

An immediate consequence of Proposition~\ref{thm:logic-characterisation-of-regularity}
is that $\FO{\univ}$ is decidable, as can be shown using automata-theoretic arguments \citealt{hong2022symbolic}.
The theory remains decidable even if we extend the structure $\univ$ with arbitrary regular relations.
In the sequel, we shall regard relations definable in $\univ$ (such as membership in a regular language) as regular relations, and freely use regular relations as syntactic sugar when we are defining a formula over $\univ$.
For a fixed alphabet $\Sigma$, we shall use {\UNIV} to denote the decidable first-order theory of regular relations over $\Sigma$.

A structure $\struct$ is \emph{regular} if its relational variant is isomorphic to a relational structure $\structT$ such that
(i) the universes of $\structT$ are regular languages over a finite alphabet $\Sigma$, and
(ii) all relations in $\structT$ are regular.
In such case, we call
$\structT$ a \emph{regular presentation} of~$\struct$.
It follows by Proposition~\ref{thm:logic-characterisation-of-regularity} that
the first-order theory of a regular structure is decidable.

\section{Bisimulation in regular relations}
\label{sec:model}
In this section, we establish how weighted transition systems (WTSs) and bisimulation proof rules can be systematically formulated within our logical framework. We begin by specifying regular WTSs using the first-order theory of regular relations, followed by the definition of verification conditions within the same formalism.
These results provide a unified approach for representing and reasoning about WTS properties and proofs.
\subsection{The non-weighted case}
As a warm-up, we first consider bisimulation relations over non-weighted transition systems.
A \emph{labeled transition system (LTS)} is a structure $\struct \coloneqq \iset{S, A, \delta }$ with a labeled transition relation $\delta \subseteq S \times A \times S$.
For convenience, denote $(s,a,t) \in \delta$ as $s \to_a t$.
Then a binary relation $R \subseteq S \times S$ is a bisimulation on the LTS $\struct$
if for all $a\in A$, $t,t' \in S$, and $(s, s') \in R$:
\vspace{.5em}\\\vspace{.2em}
(i) $s \to_a t$ only if $s' \to_a t''$ and $(t,t'')\in R$ for some $t''$;\\\vspace{.5em}
(ii) $s' \to_a t'$ only if $s \to_a t''$ and $(t'',t')\in R$ for some $t''$.\\\vspace{.5em}
These conditions can be expressed as a first-order formula:
\begin{align*}
 \psi(s, a, s') & \coloneqq (\forall t.\, \delta(s, a, t) \Rightarrow \exists t''.\, \delta(s', a, t'') \wedge R(t, t'')) \\
                & \quad \wedge (\forall t'.\, \delta(s', a, t') \Rightarrow \exists t''.\, \delta(s, a, t'') \wedge R(t'', t')).
\end{align*}
Given a binary relation $R \subseteq S \times S$,
let $\struct_R$ denote the structure obtained by extending $\struct$ with $R$.
Then $R$ is a bisimulation on $\struct$ if and only if
\begin{equation}
    \label{sb-proof-rule}
    \struct_R \models \forall s.\,\forall s'. \left( R(s,s') \Rightarrow \forall a.~\psi(s, a, s')\right).
\end{equation}
It is algorithmic to check (\ref{sb-proof-rule}) when both $\struct$ and $R$ are regular.
Indeed, as bisimilarity is preserved under isomorphism, we can verify the sentence by reasoning about the regular presentation of $\struct_R$ in the decidable theory {\UNIV}.

\NEW{While validating a regular bisimulation for a regular LTS is decidable, checking bisimilarity for a regular LTS is not.
To see this, notice that the configuration graph of a Turing machine (TM) is regular \cite{barany2007automatic}:
it can be modeled by a bounded-branching regular LTS with a dummy action $a$, such that each configuration encodes the current state and tape content, and the initial configuration $s_0$ stores an input $x$.
Observe that $s_0$ is bisimilar to a configuration $s$ with a single outgoing transition $s \to_a s$ if and only if the TM does not halt on $x$. Therefore, the halting problem can be reduced to bisimulation checking in a regular LTS, rendering the latter problem undecidable.}
\subsection{Specifying and validating weighted systems}
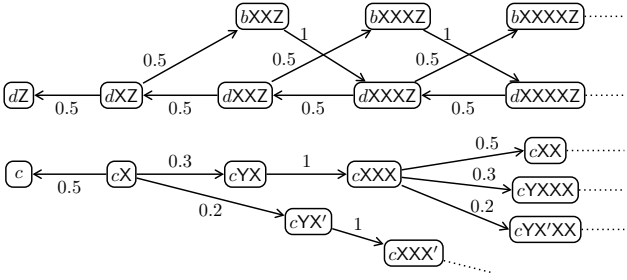
\begin{figure}
\centering
\scalebox{0.7}{
\begin{tikzpicture}[x=2.5cm,y=1.5cm]
\foreach \x/\d/\l in {0/0/d\Z,1/.22/d\X\Z,2/.28/d\X\X\Z,3/.2/d\X\X\X\Z,4/0/d\X\X\X\X\Z}
   \node (D\x)   at (\x-\d,-1)   [ran] {$\l$};
\foreach \x/\xx in {1/0,2/1,3/2,4/3}
   \draw [tran] (D\x) to  node[below]   {$0.5$} (D\xx);
\foreach \x/\d\l in {2/.15/b\X\X\Z,3/.12/b\X\X\X\Z,4/0/b\X\X\X\X\Z}
   \node (H\x)   at (\x-\d,0)   [ran] {$\l$};
\foreach \x/\xx in {1/2,2/3,3/4}
   \draw [tran,rounded corners] (D\x) -- node[very near start,above=0.7mm] {$0.5$} (H\xx);
\foreach \x/\xx in {2/3,3/4}
   \draw [tran,rounded corners] (H\x) -- node[near start,above=0.5mm,inner sep=0] {$1$} (D\xx);
\draw [thick, dotted] (H4) -- +(.6,0);
\draw [thick, dotted] (D4) -- +(.6,0);
\foreach \x/\d/\l in {0/0/c,1/.22/c\X,2/.28/c\Y\X,3/.3/c\X\X\X}
   \node (A\x)   at (\x-\d,-2)   [ran] {$\l$};
   \node (Ba)   at (2.2,-2.6)   [ran] {$c\Y\X'$};
   \node (Bb)   at (3,-3)   [ran] {$c\X\X\X'$};
   \node (Aa)   at (4,-1.7)   [ran] {$c\X\X$};
   \node (Ab)   at (4,-2.2)   [ran] {$c\Y\X\X\X$};
   \node (Ac)   at (4,-2.7)   [ran] {$c\Y\X'\X\X$};
   \draw [tran,rounded corners] (A1) -- node[midway,below] {$0.5$} (A0);
   \draw [tran,rounded corners] (A1) -- node[midway,above] {$0.3$} (A2);
   \draw [tran,rounded corners] (A1) -- node[midway,below] {$0.2$} (Ba);
   \draw [tran,rounded corners] (A2) -- node[midway,above] {$1$} (A3);
   \draw [tran,rounded corners] (A3) -- node[pos=0.7,above] {$0.5$} (Aa.west);
   \draw [tran,rounded corners] (A3) -- node[near end,above] {$0.3$} (Ab.west);
   \draw [tran,rounded corners] (A3) -- node[pos=0.75,above] {$0.2$} (Ac.west);
   \draw [tran,rounded corners] (Ba) -- node[midway,above] {$1$} (Bb);
\draw [thick, dotted] (Aa) -- +(.6,0);
\draw [thick, dotted] (Ab) -- +(.6,0);
\draw [thick, dotted] (Ac) -- +(.6,0);
\draw [thick, dotted] (Bb) -- +(.6,-0.25);
\end{tikzpicture}}
\caption{{Part of the configuration graph in Example~\ref{ex:pPDA}, adapted from \cite{forejt2018game}.}}
\label{fig:pPDA}
\end{figure}
Weighted transition systems introduce additional complexity compared to their non-weighted counterparts,
requiring careful specification and validation.
Below, we examine how to formally define such systems, and illustrate this process through the regular presentation of a simple probabilistic system.
\OMIT{
While the definition of regular WTSs  allows for an infinite number of transition weights,
a subclass of regular WTSs is of practical interest and worth mentioning.
A WTS is said to satisfy the~\emph{minimal weight assumption} \citecf{larsen1991bisimulation} if the weights of the WTS are multiples of some $\epsilon > 0$.
This assumption is practically sensible since it is satisfied by most WTSs we encounter, e.g., finite WTSs, probabilistic pushdown automata~\cite{EE04}, and most examples from probabilistic parameterized systems~\cite{lin2016liveness, lengal2017fair} including our case studies in Section~\ref{sec:anonymity-examples}.
When a WTS satisfies the minimal weight assumption, we can convert its weights to natural numbers by factoring.
This trick is known in the literature of probabilistic verification (see e.g.~\cite{AHM07}), and is sound for the purpose of model-checking bisimulation.
When a regular WTS satisfies the minimal weight assumption, we can assume without loss of generality that the weights are natural numbers,
as is shown in the following example.
}
\begin{example}
\label{ex:pPDA}
\NEW{
Pushdown automata (PDAs) are finite-state machines equipped with a stack over a finite set of symbols.
Each transition in a PDA can modify the stack by pushing or popping a symbol.
Probabilistic pushdown automata (pPDAs), a.k.a.~recursive Markov chains \cite{etessami2009recursive}, further enrich PDAs with transition probabilities. Consider a pPDA from Example 1 of \cite{forejt2018game}, which has states $Q = \{b, c, d\}$ and stack symbols $\Gamma = \{{\X}, {\X}', {\Y}, {\Z}\}$.
A configuration $qs \in Q\Gamma^*$ is a word that comprises the current state $q\in Q$ and stack content $s \in \Gamma^*$. The transition rules are given by:
\[
\begin{array}{llll}
d{\X} \xrightarrow{0.5} b{\X}{\X} & ~d{\X}  \xrightarrow{0.5} d & ~b{\X} \xrightarrow{1} d{\X}{\X} & ~c{\Y} \xrightarrow{1} c{\X}{\X} \\
c{\X} \xrightarrow{0.3} c{\Y}{\X} & ~c{\X}  \xrightarrow{0.2} c{\Y}{\X}' & ~c{\X} \xrightarrow{0.5} c \\
c{\X}'\xrightarrow{0.4} c{\Y}{\X} & ~c{\X}' \xrightarrow{0.1} c{\Y}{\X}' & ~c{\X}' \xrightarrow{0.5} c
\end{array}
\]
A rule is applicable if the configuration's prefix matches the left-hand side of the rule.

Figure~\ref{fig:pPDA} presents a fragment of the pPDA's configuration graph.
We can model this graph as a WTS in {\UNIV}:}
the configuration set is $Q \Gamma^*$,
the weights are encoded in binary after normalizing the probabilities to natural numbers.
The transition relation is encoded as a disjunctive formula $\delta(s, a, t, p)$, e.g., the disjunct corresponding to the rule $\,d{\X} \xrightarrow{0.5} b{\X}{\X}\,$ is
$\exists u \in \Gamma^*.\,(s = d{\X}u \wedge t = b{\X}{\X}u \wedge p = \one\zero\one)$.
The branching of the resulting WTS is bounded by $3$.

\NEW{Note that {\UNIV} is expressive enough to represent any pPDA with rational transition probabilities,
since these probabilities can be encoded as natural weights in its regular presentation, as we have exemplified above.
See \cite{barany2007automatic} for further details on infinite-state systems with regular presentations.}
\end{example}
\NEW{We can effectively check whether a set of {\UNIV} formulas properly specifies a WTS $\struct \coloneqq \WTS$, provided that the branching bound of $\struct$ is known.}
\NEW{Suppose we have {\UNIV} formulas defining $S$, $P$, $A$, and a formula $\phi(s, a, t, p)$ defining the transition function $\delta$.
To check that these formulas indeed specify a WTS with a branching bound $n$, we essentially need to verify the following three conditions:}
\begin{enumerate}
	\item The branching is indeed bounded by $n$. That is, for each $s\in S$ and $a \in A$, there are at most $n$ distinct $t$'s in $S$ such that $\phi (s, a, t, p) \wedge p \neq 0$ is satisfiable.
	\item $\phi$ encodes a function of sort $S \times A \times S \to P$. That is, for each $s,t\in S$ and $a \in A$, there exists precisely one $p\in P$ satisfying $\phi(s,a,t,p)$.
	\item When the WTS is a Markov chain or a Markov decision process, $\phi$ should encode a mapping from $S$ to the set of probability distributions over $S$.
\end{enumerate}
The first two conditions are expressible in {\UNIV} and hence are algorithmic by Proposition~\ref{thm:logic-characterisation-of-regularity}. To verify the third condition, we define an auxiliary formula $\psi(s, a, \args{t})$, which asserts that the successors of $s$ through action $a$ are among the configurations $\args{t} \coloneqq t_1,\ldots, t_n :$
\begin{align*}
    \psi(s, a, \args{t}) \coloneqq \forall t.\,\forall p.\left(
    \phi(s, a, t, p) \wedge p \neq 0 \Rightarrow \bigvee_{i\,} (t = t_i)\right).
\end{align*}
The third condition then amounts to checking that there exists a number $q \in P$ such that
\begin{align*}
    \forall s.\forall a.
    & \left( \exists \args{t}. \exists \args{p}.
    \left(\psi(s, a, \args{t}) \wedge \bigwedge_{i} \phi(s, a, t_i, p_i) \wedge \sum_{\,i} p_i = q \right)\right)
\end{align*}
holds when the WTS is a Markov chain, and
\begin{align*}
    & \forall s.\forall a.\left(\forall t.\forall p.\, (\phi(s, a, t, p) \Leftrightarrow p = 0)\right) \\
    & ~\vee \left( \exists \args{t}.\exists \args{p}.
    \left( \psi(s, a, \args{t}) \wedge \bigwedge_{i} \phi(s, a, t_i, p_i) \right) \wedge \sum_{\,i} p_i = q\right)
\end{align*}
holds when the WTS is a Markov decision process. Again, this check is algorithmic in {\UNIV}.
Hence, it is decidable to check whether a given regular presentation of a WTS is well-defined.
\subsection{Proof rules for bisimulation over weighted systems}
Recall that $\struct_R$ denotes the structure obtained by extending structure $\struct$ with relation $R$.
The following theorem summarizes the main technical result of this work.
\begin{theorem}
    \label{th:verify}
    There is a fixed first-order sentence $\Phi$ such that a given binary relation $R$ is a bisimulation on a WTS $\struct$ if and only if $\struct_R \models \Phi$. Furthermore, checking $\struct_R \models \Phi$ is decidable when both $\struct$ and $R$ are regular.
\end{theorem}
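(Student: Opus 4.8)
The plan is to express the bisimulation condition \eqref{defn:pb-proof-rule} as a fixed first-order sentence $\Phi$ over the vocabulary of $\struct_R$, and then invoke Proposition~\ref{thm:logic-characterisation-of-regularity} to conclude decidability in the regular case. The main conceptual obstacle is that the defining condition of a bisimulation quantifies over the equivalence classes $E \in S/R$ and sums transition weights over each class --- both of which appear to require second-order quantification and unbounded arithmetic, neither of which is available in plain first-order logic over $\univ$. The key observations that make this work are (i) the WTS is bounded branching, so for any state $s$ and action $a$ only a bounded number $b$ of successors carry nonzero weight, hence the sum $\sum_{t\in E}\delta(s,a,t)$ ranges over a bounded number of summands; and (ii) $R$ is required to be an equivalence relation, so ``$t$ and $t'$ lie in the same class'' is simply $R(t,t')$, avoiding any explicit quantification over classes.

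First I would write down a first-order formula $\mathrm{sameclass}(t,t') \coloneqq R(t,t')$, and, following the pattern already used for the well-definedness check, an enumeration formula $\psi(s,a,\args{t})$ asserting that all nonzero-weight $a$-successors of $s$ lie among $t_1,\dots,t_b$. Then, to compare the mass $s$ and $s'$ send into a common class, I would pick witnesses $\args{t}=t_1,\dots,t_b$ enumerating the successors of $s$ and $\args{t'}=t'_1,\dots,t'_b$ enumerating those of $s'$; collect, for each $i$, the weight $p_i$ with $\phi(s,a,t_i,p_i)$ and likewise $p'_j$ with $\phi(s',a,t'_j,p'_j)$; and for every $i$ require that the sum of those $p_j'$ with $R(t_i,t'_j)$ equals the sum of those $p_k$ with $R(t_i,t_k)$ --- and symmetrically with the roles of $s,s'$ swapped, to cover classes hit by $s'$ but not by $s$. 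The bounded-branching bound $b$ makes all of these finite sums of at most $b$ terms, which are expressible using the ternary addition relation $+$ of the WTS (iterated a fixed number of times); note this is why $\Phi$ can be a single fixed sentence once $b$ is fixed. One must be careful with duplicate successors (several $t_i$ equal) and with the zero weight: using the condition $p_i \neq 0$ inside the successor enumeration and summing weights $\phi$-consistently handles both, since $\phi$ is a genuine function by the well-definedness hypothesis.

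The sentence $\Phi$ then reads, schematically,
\[
  \forall s.\,\forall s'.\,\Bigl( R(s,s') \;\wedge\; R \text{ is an equivalence relation} \;\Rightarrow\; \forall a.\ \chi(s,a,s')\Bigr),
\]
where $\chi(s,a,s')$ is the bounded existential-universal combination of $\psi$, $\phi$, $R$ and iterated $+$ described above, and ``$R$ is an equivalence relation'' is the usual three first-order axioms. Correctness is argued in two directions: if $\struct_R\models\Phi$ then $R$ is an equivalence relation and, unpacking $\chi$, for each $(s,s')\in R$, each action $a$ and each class $E$ we get $\sum_{t\in E}\delta(s,a,t)=\sum_{t'\in E}\delta(s',a,t')$ (only finitely many classes receive nonzero mass, and those are exactly the ones witnessed by the successor tuples), i.e.\ \eqref{defn:pb-proof-rule}; conversely a bisimulation $R$ satisfies every instance of $\chi$ by choosing the successor tuples to enumerate the actual nonzero-weight successors. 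Finally, for the decidability claim: when $\struct$ and $R$ are regular, $\struct_R$ is regular, so by Proposition~\ref{thm:logic-characterisation-of-regularity} the first-order theory of its regular presentation is decidable, and $\Phi$ --- being a first-order sentence over exactly the relations $S,P,A,\delta,+,R$, all of which are regular in the presentation --- can be decided there; here I rely on the earlier remark that $\FO{\univ}$ stays decidable when extended by arbitrary regular relations. I expect the bookkeeping around duplicate/zero-weight successors and the precise shape of the iterated-$+$ subformula to be the only genuinely fiddly part; everything else is a direct translation.
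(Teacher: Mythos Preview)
Your proposal is correct and follows essentially the same approach as the paper: exploit bounded branching to enumerate the finitely many successors of $s$ and $s'$, use iterated applications of $+$ to express the class-wise sums, and compare them via $R$. The paper introduces an auxiliary labeling $\args{\alpha},\args{\beta}$ of the successor tuples (with a $\mathsf{compat}$ check against $R$) and then universally quantifies over labels, whereas you use the successors themselves as class representatives and check both directions---this is a minor stylistic variant of the same idea. One small slip to fix: in your schematic for $\Phi$, the equivalence-relation axioms must be a top-level conjunct, not part of the antecedent of the implication; as written, any $R$ that fails to be an equivalence relation would vacuously satisfy your $\Phi$.
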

\begin{proof}
Fix a WTS~$\struct \coloneqq \WTS$ with branching bound $n$.
The fact that a binary relation $R$ is an equivalence relation
can be expressed by
	$\Phi_\mathsf{eq} \coloneqq \forall s. \forall s'. \forall s''.~R(s,s)
	\wedge (R(s,s') \Rightarrow R(s',s))
	\wedge (R(s,s') \wedge R(s',s'') \Rightarrow R(s,s'')).$
Now, it suffices to define a sentence $\Phi$ as
\begin{equation}
    \label{def:Phi}
    \Phi_\mathsf{eq} \wedge \forall s. \forall s'.~R(s, s') \Rightarrow
    \forall a.\,(\psi(s, a, s') \vee \lambda(s, a, s')),
\end{equation}
such that $R$ is a bisimulation over $\struct$ if and only if
$\struct_R \models \Phi$.
Here, $\psi(s,a,s')$ is an auxiliary formula asserting that configurations $s$ and $s'$ have no successor through action $a$:
\begin{equation}
    \label{def:psi}
    \psi(s,a,s') \coloneqq \forall t.\,(\delta(s,a,t) = 0 \wedge \delta(s',a,t) = 0).
\end{equation}
Before defining $\lambda(s,a,t)$, we first offer some intuition and auxiliary formulas.
Given configurations $s$ and $t$, the formula $\lambda(s,a,t)$ will first guess a set of $n$ configurations $\args{u}$ containing the successors of $s$ through action $a$, and a set of $n$ configurations $\args{v}$ containing the successors of $t$ through action $a$.
The formula will also guess a labeling $\args{\alpha}$ and $\args{\beta}$ that corresponds to the partitioning of the configurations
$\args{u}$ and $\args{v}$, respectively.
The intuition here is that the labeling ``names'' the partitions:
$\alpha_i = \alpha_j$ (resp.~$\beta_i = \beta_j$)
means that $u_i$ and $u_j$ (resp.~$v_i$ and $v_j$)
are guessed to be in the same partition.
The formula then checks that the guessed partitioning is compatible with the equivalence relation $R$
(i.e.~$u_i$ and $u_j$ have the same label iff $R(u_i, u_j)$ holds),
and that the probability masses of the partitions assigned by configurations $s$ and $t$
satisfy the constraint given at (\ref{defn:pb-proof-rule}).

For the labeling, define an auxiliary formula $\mathsf{succ}(s,a,\args{u})$:
\[
    \left( \bigwedge_{i < j} u_i \neq u_j \right) \wedge
    \left( \forall t.\:\delta(s, a, t) \neq 0 \Rightarrow \bigvee_{\,i} t = u_i \right),
\]
stating that the successors of configuration $w$ on action $a$ are among the $n$
distinct configurations $\args{u}$.
Note that a configuration may have fewer than $n$ successors.
In this case, we can set the rest of the variables to arbitrary distinct configurations.
Given a labeling, we need to check that $R$ is compatible with the guessed partitions,
and that configurations $s$ and $t$ assign the same probability mass to the same partition.
Let $\args{k}$ be a labeling for configurations $\args{s}$.
To check that the partitioning induced by the labeling is compatible with $R$,
we need to express the condition that $k_i = k_j$ if and only if $R(s_i, s_j)$ holds.
This condition can be expressed as a formula
\[
    \mathsf{compat}(\args{s}, \args{k}) \coloneqq
    \bigwedge_{i<j} \left( R(s_i, s_j) \Leftrightarrow k_i = k_j \right).
\]
Now, we can define the formula $\lambda(s, a, s')$ at (\ref{def:Phi}) as
\begin{align}
\label{def:lambda}
		& \exists \args{u}.\,\exists \args{v}.\,\exists \args{\alpha}.\,\exists \args{\beta}.~\mathsf{succ}(s, a, \args{u}) \wedge \mathsf{succ}(s', a, \args{v}) \nonumber\\
        & \qquad \wedge \mathsf{compat}(\args{u}, \args{\alpha}) \wedge \mathsf{compat}(\args{v}, \args{\beta}) \\
        & \qquad \wedge \forall k.
        \left(\sum_{\,i:\:\alpha_i = k} \delta(s, a, u_i) = \sum_{\,i:\:\beta_i = k} \delta(s', a, v_i)\right). \nonumber
\end{align}
Thus, $\struct_R \models \lambda(s,a,s')$ iff
$\sum_{t\in E} \delta(s,a,t) = \sum_{t\in E} \delta(s',a,t)$ holds for any equivalence class $E \in S/R$.
It follows that the sentence $\Phi$ characterizes a bisimulation relation,
and $\struct_R \models \Phi$ if and only if $R$ is a bisimulation over $\struct$.

We proceed to show that $\Phi$ is expressible in ${\UNIV}$. Translating $\Phi_\mathsf{eq}$ and $\psi$ to ${\UNIV}$ is straightforward. To translate $\lambda$, the key step is to express the summation inside (\ref{def:lambda}). For this, we define a formula that performs iterated additions:
\begin{align*}
    \chi(s, a, \args{t}, \args{\alpha}, k, z)
        & \coloneqq \exists \args{p}.~p_1 = 0 \wedge p_{n+1} = z \\
        & \qquad \wedge \bigwedge_{1\le i \le n} \chi'(s, a, t_i, \alpha_i, k, p_i, p_{i+1}),
\end{align*}
where $s \in S$, $\args{t} \in S^{n}$, $k\in P$, $\args{\alpha} \in P^n$, $\args{p} \in P^{n+1}$, and
\begin{align*}
    & \chi'(s, a, t, \kappa, k, x, y) \coloneqq (\kappa \neq k \wedge x = y)\ \vee \\
    & \qquad\qquad\qquad (\kappa = k \wedge \exists p.\:\delta(s, a, t) = p \wedge (x + p = y)).
\end{align*}
The formula $\bigwedge_{1\le i \le n} \chi'(s, a, t_i, \alpha_i, k, p_i, p_{i+1})$ effectively sums up the weights in $\{\,\delta(s, a, t_i) : \alpha_i = k,\, 1\le i \le n\,\}$ and stores the result in $p_{n+1}$.
Thus, given $k\in P$, $\args{u}, \args{v} \in S^{n}$ and $\args{\alpha}, \args{\beta} \in P^n$,
we have
\[
    \sum_{\,i:\:\alpha_i = k} \delta(s, a, u_i) = \sum_{\,i:\:\beta_i = k} \delta(s', a, v_i)
\]
if and only if
\[
    \struct \models \exists z.\ \chi(s, a, \args{u}, \args{\alpha}, k, z)
        \wedge \chi(t, a, \args{v}, \args{\beta}, k, z),
\]
and the definition at (\ref{def:lambda}) can be equivalently written as
\begin{align*}
		& \exists \args{u}.\,\exists \args{v}.\,\exists \args{\alpha}.\,\exists \args{\beta}.~\mathsf{succ}(s, a, \args{u}) \wedge \mathsf{succ}(s',a, \args{v}) \nonumber\\
        & \qquad \wedge \mathsf{compat}(\args{u}, \args{\alpha}) \wedge \mathsf{compat}(\args{v}, \args{\beta}) \\
        & \qquad \wedge \forall k.\,\exists z.\: \chi(s, a, \args{u}, \args{\alpha}, k, z) \wedge \chi(s',a, \args{v}, \args{\beta}, k, z).
\end{align*}
Consequently, the sentence $\Phi$ at (\ref{def:Phi}) is definable in $\FO{\struct_R}$.
By Proposition~\ref{thm:logic-characterisation-of-regularity}, it is decidable to check $\struct_R \models \Phi$ when both $\struct$ and $R$ are regular. This concludes our proof.
\end{proof}
\begin{example}
Consider the regular WTS from Example~\ref{ex:pPDA}.
Note that the configurations $d{\X}{\Z}$ and $c{\X}$ are bisimilar.
This fact can be shown using a bisimulation relation with equivalence classes
$\{d{\X}^k{\Z}\} \cup \{dw : w \in \{{\X}, {\X}'\}^k\}$
and
$\{b{\X}^{k+2}{\Z}\} \cup \{c{\Y} w : w \in\{{\X}, {\X}'\}^{k+1}\}$
for all $k\geq 0$.
This bisimulation relation is definable as the reflexive and symmetric closure of a regular relation $R$, where $(v, u)\in R$ if and only if
\begin{align*}
    & (v \in d{\X}^*{\Z} \wedge u \in c({\X}+{\X}')^* \wedge |v| = |u| + 1)\\
    & \vee (v \in c({\X}+{\X}')^* \wedge u \in c({\X}+{\X}')^* \wedge |v| = |u|)\\
    & \vee (v \in b{\X}^*{\Z} \wedge u \in c{\Y}({\X}+{\X}')^* \wedge |v| = |u| + 1)\\
    & \vee (v \in c{\Y}({\X}+{\X}')^* \wedge u \in c{\Y}({\X}+{\X}')^* \wedge |v| = |u|).
\end{align*}
The formula $\lambda(s,a,s')$ at (\ref{def:lambda}) checks this bisimulation relation for all states.
To see the formula in action, fix two bisimilar configurations $c{\X}$ and $d{\X}{\Z}$.
In the WTS, $c{\X}$ has three successors, $c{\Y}{\X}$, $c{\Y}{\X}'$, and $c$, with probabilities $0.3$, $0.2$, and $0.5$, respectively;
$d{\X}{\Z}$ has two successors, $b{\X}{\X}{\Z}$ and $d{\Z}$, each with probability $0.5$.
These successors form two equivalence classes $\{d{\Z},c\}$ and $\{b{\X}{\X}{\Z},c{\Y}{\X},c{\Y}{\X}'\}$.
To satisfy formula $\lambda$,
let $s=c{\X}$ with successors $u_1=c{\Y}{\X}$, $u_2=c{\Y}{\X}'$, $u_3=c$.
Let $s'=d{\X}{\Z}$ with successors $v_1=b{\X}{\X}{\Z}$, $v_2=d{\Z}$, and $v_3$ being any configuration not equal to $v_1$ and $v_2$.
\NEW{To label these successors,
we can set
$\alpha_3 = \beta_2 = 1$,
$\alpha_1 = \alpha_2 = \beta_1 = 2$, and $\beta_3$ to an arbitrary number other than $1$ and $2$.
Now, for $k \notin \{1,2\}$,
$\sum_{\,i:\:\alpha_i = k} \delta(s, a, u_i) = \sum_{\,i:\:\beta_i = k} \delta(s', a, v_i)$ yields $0=0$.
For $k=1$, it yields $\delta(s,a,u_3)=\delta(s',a,v_2)=0.5$. For $k=2$, it yields $\delta(s,a,u_1)+\delta(s,a,u_2)=0.3+0.2=\delta(s',a,v_1)=0.5$.
Thus, $\lambda(s,a,s')$ confirms that $c\X$ and $d\X\Z$ are bisimilar.}
\end{example}
Theorem~\ref{th:verify} leads to the following decidability result.
\begin{theorem}
   \label{thm:synthesis}
    Given a regular WTS $\struct$ and a regular relation $E \subseteq S \times S$,
    there exists a procedure that finds either a non-bisimilar pair $(u, v)\in E$,
    or a regular bisimulation relation $R$ over $\struct$ such that $E \subseteq R$.
    Furthermore, the procedure terminates if and only if $E$ contains a non-bisimilar pair
    or $E$ is a subset of a regular bisimulation relation.
\end{theorem}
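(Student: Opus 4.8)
The plan is to give a \emph{dovetailed} semi-decision procedure that runs two cooperating searches in parallel (interleaving their steps), one looking for a success certificate and one for a failure certificate. Search~(A) enumerates all finite automata $\mathcal{A}$ over the convolution alphabet $(\Sigma_{\blank}^2)^*$; each $\mathcal{A}$ is read as the binary relation $R_{\mathcal{A}} \coloneqq \{(w_1,w_2) : w_1\otimes w_2 \in \lang{\mathcal{A}}\}$, intersected with $S\times S$ and with the (regular) set of well-formed convolutions so that $R_{\mathcal{A}}$ is always a genuine \emph{regular} relation over configurations. For each $R_{\mathcal{A}}$ we (i)~decide, using Theorem~\ref{th:verify}, whether $R_{\mathcal{A}}$ is a bisimulation on $\struct$, and (ii)~decide the regular language inclusion $\lang{E}\subseteq\lang{R_{\mathcal{A}}}$, i.e.\ whether $E\subseteq R_{\mathcal{A}}$; if both succeed we halt and return $R_{\mathcal{A}}$. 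Step~(i) is exactly Theorem~\ref{th:verify} (both $\struct$ and $R_{\mathcal{A}}$ are regular), and step~(ii) is decidable since inclusion of regular languages is decidable and $\lang{E}$, $\lang{R_{\mathcal{A}}}$ are computable by Proposition~\ref{thm:logic-characterisation-of-regularity}.

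Search~(B) enumerates all pairs $(u,v)\in E$ (possible since $\lang{E}$ is regular and its members can be listed) together with all PML formulas $\phi$; for each it tests whether $u\models\phi$ and $v\not\models\phi$, halting and reporting $(u,v)$ if so. The one nontrivial ingredient is that \emph{PML model checking is decidable on a regular WTS}, not merely a finite one as in Proposition~\ref{thm:PML-finite-WTS}. I would prove this by induction on $\phi$, showing that $\sem{\phi}\coloneqq\{s\in S : s\models\phi\}$ is $\UNIV$-definable, hence regular: the $\true$ and Boolean cases are immediate, and for $\langle a\rangle_p\psi$ note that $\struct$ has a branching bound $n$, so although a configuration may have infinitely many successors over infinitely many actions, for the fixed action $a$ it has at most $n$; thus the condition ``$\sum_{t\in\sem{\psi}} \delta(s,a,t)$ exceeds $p$'' is a bounded summation that can be written in $\UNIV$ by guessing a list of $\le n$ candidate successors (as in $\mathsf{succ}$) and reusing the iterated-addition formula $\chi$ from the proof of Theorem~\ref{th:verify}. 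Given $\sem{\phi}$ as an automaton, deciding $u\models\phi$ is trivial, and there are only countably many PML formulas, so search~(B) is effective.

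For correctness, soundness is immediate: any $R_{\mathcal{A}}$ returned by~(A) is a bisimulation containing $E$ by Theorem~\ref{th:verify}, and any pair reported by~(B) is genuinely non-bisimilar because bisimilar configurations satisfy the same PML formulas (the easy direction of Proposition~\ref{thm:PML-finite-WTS}). For the termination characterisation: if $E$ contains a non-bisimilar pair $(u,v)$, then by Proposition~\ref{thm:PML-finite-WTS} some PML formula separates $u$ and $v$, which search~(B) eventually enumerates and verifies, so the procedure halts; if instead $E\subseteq R$ for a regular bisimulation $R$, then search~(A) eventually reaches an automaton presenting $R$, Theorem~\ref{th:verify} confirms it, the inclusion test succeeds, and the procedure halts. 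Conversely, if the procedure halts it did so either in branch~(A), exhibiting a regular bisimulation containing $E$, or in branch~(B), exhibiting a non-bisimilar pair in $E$; in either case one of the two stated conditions holds. Hence the procedure terminates exactly when $E$ contains a non-bisimilar pair or $E$ lies inside a regular bisimulation.

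I expect the main obstacle to be precisely the step above: extending decidability of PML model checking from finite WTSs (Proposition~\ref{thm:PML-finite-WTS}) to regular infinite WTSs, where the bounded-branching assumption is essential to keep each modal operator a bounded summation and thus keep $\sem{\phi}$ inside $\UNIV$. One should also be mildly careful that the ``equal PML theory implies bisimilar'' half of Proposition~\ref{thm:PML-finite-WTS}, which completeness of branch~(B) relies on, needs image-finiteness per action — again supplied by bounded branching. Everything else (enumerating automata and PML formulas, the two decidable sub-checks, and the dovetailing) is routine.
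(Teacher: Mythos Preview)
Your proposal is correct and follows the same overall architecture as the paper: dovetail two semi-procedures, one enumerating regular candidate relations and invoking Theorem~\ref{th:verify} plus a regular-inclusion check, and one enumerating pairs in $E$ together with PML formulas and invoking the Hennessy--Milner characterization from Proposition~\ref{thm:PML-finite-WTS}. Search~(A) is essentially identical to the paper's.

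The genuine difference is in Search~(B). The paper does \emph{not} prove that PML model checking is decidable on arbitrary regular WTSs. Instead it observes that a PML formula $\phi$ has finite modal depth and mentions only finitely many actions, so evaluating $\phi$ at a configuration $v$ only requires the depth-$d$ unfolding $\struct_v^d$ restricted to those actions; by bounded branching this is a \emph{finite} WTS, and Proposition~\ref{thm:PML-finite-WTS} applies directly. Hence the paper enumerates triples $(v,w,d,\phi)$ and checks $\struct_v^d\models\phi$, $\struct_w^d\not\models\phi$ on finite systems. Your route instead proves the stronger intermediate lemma that $\sem{\phi}$ is $\UNIV$-definable for every PML formula $\phi$, recycling the $\mathsf{succ}$ and $\chi$ machinery from the proof of Theorem~\ref{th:verify}. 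This is sound (bounded branching makes each $\langle a\rangle_p$ a bounded sum, and ``${>}\,p$'' is first-order definable from $+$ over nonnegative weights), and it yields a reusable decidability result for PML on regular WTSs that the paper does not state. The paper's route is more elementary and avoids having to justify that the order on $P$ is regular; your route is cleaner in that it needs no separate depth parameter and no construction of finite unfoldings. Both rely on bounded branching in exactly the way you flagged.
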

\begin{proof}
    It suffices to give two semi-procedures, one checking the
    existence of $R$ and the other finding a non-bisimilar pair $(v,w)$ in $E$.
    By Theorem \ref{th:verify}, we can enumerate all plausible regular
    relations $R$ and check that $R$ is a bisimulation over $\struct$.
    The inclusion of $E$ in $R$ is a first-order property and thus can be checked effectively as well.
    To see that non-bisimulation is recursively enumerable,
    let $\struct_v$ denote the tree-structured WTS induced by unfolding $\struct$ from configuration $v$,
    and let $\struct_v^d$ denote the finite WTS induced by restricting $\struct_v$ to up to $d$ steps from $v$.
    By Proposition~\ref{thm:PML-finite-WTS},
    two configurations $v,w$ of $\struct$ are non-bisimilar
    if and only if there is some PML formula $\phi$ such that
    $\struct_v \models \phi$ and $\struct_w \not\models \phi$.
    Since $\struct$ is bounded branching, $\phi$ can be checked by examining a finite number of configurations.
    Thus, there exists $d < \infty$ such that
    $v,w$ are non-bisimilar if and only if $\struct_v^d \models \phi$ and $\struct_w^d \not\models \phi$.
    It follows that we can locate a non-bisimilar pair by enumerating and checking all pairs $(v,w) \in E$,
    distances $d \in \Nat$, and PML formulas $\phi$ over actions $A$ and weights $P$.
    This procedure is effective by Proposition~\ref{thm:PML-finite-WTS}, which concludes the proof.
\end{proof}
\OMIT{
\subsection{Generalisation for WTSs with regular set of actions}
[TODO:
Explain that the proof rules
introduced in the previous sections can be
modified to verify bisimulation
over a generalized version of WTSs
$\langle S, \{\delta \}_{a\in\ACT}\rangle$,
where each action $a$ is a finite word and
$\ACT$ is a regular language.
Hence our main result
Theorem \ref{th:synthesis} also holds
for the generalized WTSs.]
}

\section{Learning-based bisimulation synthesis}
\label{sec:learning}
While Theorem \ref{thm:synthesis} provides a brute-force method for discovering a regular bisimulation relation, more efficient strategies are needed to identify nontrivial bisimulations in practice.
To address this challenge, we propose a learning-based approach for computing bisimulations on weakly finite regular WTSs. A transition system is \emph{weakly finite} \citep{esparza2012proving} if each configuration can reach only a finite number of configurations.
\NEW{Notably, the WTS underlying a parameterized system is weakly finite, as every configuration belongs to a finite instance of the system and thus has access to only finitely many configurations.
This local finiteness allows us to effectively compute bisimulations even though the entire system has infinitely many states.}

\NEW{Specifically, our algorithm receives as input a weakly finite regular WTS $\struct \coloneqq \WTS$, along with a regular set $I \coloneqq \{s_n\}_{n \in N} \subseteq S$ of initial configurations. The WTS $\struct$ is a regular presentation of a parameterized system $\{P_n\}_{n \in N}$ such that each $P_n$ starts in the configuration $s_n \in I$.
Given a regular set $E \subseteq S \times S$ of the bisimilar pairs to verify,
our algorithm employs active automata learning to synthesize a regular bisimulation relation $R$ such that $R \supseteq E$.}
\NEW{The learning process computes such a relation by systematically exploring bisimilar and non-bisimilar pairs within the system. A crucial requirement for this process is the ability to determine whether two configurations are bisimilar. For a parameterized system, this task amounts to computing bisimulations over a finite system instance, enabling the use of well-established verification tools from the literature \cite{garavel2022equivalence}.}
\subsection{Active automata learning}
\label{section:learning}
Automata learning \cite{angluin1987learning,rivest:inference1993,kearns:introduction1994} attempts to infer a DFA for a regular language whose definition is not directly accessible.
In \emph{active learning}, this inference is achieved by making queries to a ``teacher'' who has knowledge of the target regular language, say $\langL$.
This teacher can respond to two types of queries. The first is membership query $\MEM(w)$, asking whether a word $w$ belongs to $L$. The second is equivalence query $\EQ(\cA)$, asking whether the language $L(\cA)$ recognized by a DFA $\cA$ is identical to $L$.
An active learning algorithm performs the inference iteratively. In each iteration, it makes membership queries to gather information about $\langL$. Based on the answers, it builds a hypothesis DFA $\cA_\mathsf{hyp}$ and checks whether $\cA_\mathsf{hyp}$ is a solution through an equivalence query. If it is a solution, the algorithm terminates. If not, the teacher provides a word $u$ as a counterexample, which the algorithm will utilize to refine its hypothesis for the next iteration.

Below, we elaborate on an active learning algorithm proposed by Rivest and Schapire \cite{rivest:inference1993}, which is an improved version of \emph{L-star} \cite{angluin1987learning}. The algorithm's foundation builds upon a seminal theorem from Myhill and Nerode \cite{nerode1958linear}.
\begin{proposition} [\hspace{1sp}\cite{nerode1958linear}]
    \label{prop:myhill-nerode}
    Given a regular language $\langL$, define a relation $\equiv_\langL\,\subseteq \Sigma^* \times \Sigma^*$ such that $x \equiv_\langL y$ if and only if $\forall z\in \Sigma^*.~xz\in \langL \Leftrightarrow yz\in \langL$. Then the following are true:
    \begin{itemize}
        \item The relation $\equiv_\langL$ defines an equivalence relation. The number of distinct equivalence classes produced by this relation corresponds exactly to the number of states in the minimal DFA that can recognize the language $\langL$.
        \item Any minimal DFA recognizing $\langL$ is isomorphic to the following DFA: (i) each equivalence class $[x]$ is a state; (ii) the starting state is $[\varepsilon]$; (iii) state transitions are $[x] \to [xa]$ for $a \in \Sigma$; (iv) the accepting states are $[x]$ for $x \in \langL$.
    \end{itemize}
\end{proposition}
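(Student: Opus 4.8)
The plan is to prove the two bullets separately; each splits into an ``$\le k$'' half and a ``$\ge k$'' half that fit together.

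\textbf{Equivalence relation and one inequality.} That $\equiv_\langL$ is reflexive, symmetric, and transitive is immediate, since its defining condition is a universally quantified biconditional over $z \in \Sigma^*$. For the count, I would first show that \emph{any} DFA $\cA = (Q, \Sigma, \delta_\cA, q_0, F)$ recognizing $\langL$ has at least as many states as $\equiv_\langL$ has classes. Writing $\delta^*_\cA(q_0, x)$ for the state reached from $q_0$ on reading $x$: if $\delta^*_\cA(q_0, x) = \delta^*_\cA(q_0, y)$, then $\delta^*_\cA(q_0, xz) = \delta^*_\cA(q_0, yz)$ for every $z$, so $xz \in \langL \Leftrightarrow yz \in \langL$, i.e.\ $x \equiv_\langL y$. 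Hence $x \mapsto \delta^*_\cA(q_0, x)$ induces a surjection from a subset of $Q$ onto the $\equiv_\langL$-classes; in particular there are finitely many classes, say $k$, and every recognizing DFA --- so also a minimal one --- has $\ge k$ states.

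\textbf{The matching quotient automaton.} For the reverse direction I would exhibit a DFA with exactly $k$ states recognizing $\langL$, namely the one in the second bullet: states are $\equiv_\langL$-classes, start state $[\varepsilon]$, transitions $[x] \to [xa]$, accepting states $[x]$ with $x \in \langL$. Two checks: well-definedness --- if $x \equiv_\langL y$ then $xa \equiv_\langL ya$ (apply $x \equiv_\langL y$ to the witness $az$) and $x \in \langL \Leftrightarrow y \in \langL$ (take $z = \varepsilon$) --- and correctness --- an easy induction on $|w|$ gives $\delta^*([\varepsilon], w) = [w]$, so $w$ is accepted iff $[w]$ is accepting iff $w \in \langL$. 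Thus the minimal DFA has $\le k$ states, and combined with the previous paragraph it has exactly $k$, proving the first bullet.

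\textbf{The isomorphism.} Let $\cA$ be any minimal DFA for $\langL$. By the count it has $k$ states, and by minimality every state is reachable, so $x \mapsto \delta^*_\cA(q_0, x)$ is surjective onto $Q$. It induces an equivalence on $\Sigma^*$ with exactly $k$ classes which, by the argument in the first step, refines $\equiv_\langL$; since $\equiv_\langL$ also has $k$ classes, a refinement with the same number of classes must coincide with it. Hence $[x] \mapsto \delta^*_\cA(q_0, x)$ is a well-defined bijection from $\equiv_\langL$-classes onto $Q$, and I would finish by checking it is a DFA isomorphism: it sends $[\varepsilon] \mapsto q_0$, sends the transition $[x] \to [xa]$ to $\delta^*_\cA(q_0, x) \to \delta_\cA(\delta^*_\cA(q_0, x), a)$, and sends accepting classes to $F$ since $[x]$ is accepting iff $x \in \langL$ iff $\delta^*_\cA(q_0, x) \in F$.

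\textbf{Main obstacle.} The genuinely routine parts are the closure properties and the induction that the quotient automaton accepts $\langL$. The step needing most care is the counting argument in the last paragraph --- that the reachability map out of a \emph{minimal} $\cA$ does not identify two $\equiv_\langL$-inequivalent words --- which hinges on the equality of class counts established earlier, hence on minimality of $\cA$. It is also worth stating explicitly that the proposition presupposes $\langL$ regular, which is precisely what guarantees $k < \infty$ and the existence of a minimal DFA in the first place.
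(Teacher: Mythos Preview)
Your proof is correct and follows the standard textbook route to Myhill--Nerode. Note, however, that the paper does not actually prove this proposition: it is stated as a classical result with a citation to Nerode (1958), followed only by a one-sentence intuitive restatement (``two words $x$ and $y$ are associated with the same state if and only if no suffix $z$ can differentiate between them''). So there is no paper proof to compare against; your argument is a complete and self-contained justification of a result the paper simply imports.
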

Specifically, in the minimal DFA recognizing $\langL$, two words $x$ and $y$ are associated with the same state if and only if no suffix $z$ can differentiate between them. In other words, $x$ and $y$ belong to different states in the minimal DFA if and only if there exists a suffix $z'$ such that $xz' \in \langL$ and $yz' \notin \langL$.
\begin{algorithm}[ht]
	\KwIn{A teacher that answers $\MEM(w)$ and $\EQ(\cA)$ about a target regular language $L$}
        \KwOut{A minimal DFA recognizing $L$}
        \smallskip
        Initialize the observation table $(W,D,T)$\;
	\Repeat{$\EQ(\cA_\mathsf{hyp}) = true$}{
	\While{$(W,D,T)$ is not closed}{
		Find a pair $(x,a)\in W\times \Sigma$ such that $\forall y\in W: \mathit{row}_D(xa)\neq \mathit{row}_D(y)$.
		Extend $W$ to $W\cup \{xa\}$ and update $T$ using membership queries accordingly\;
	}
	Build a candidate DFA $\cA_\mathsf{hyp}= (W, \Sigma, \delta, \lambda, F)$, where $\delta=\{(s,a,s') : s,s'\in W \wedge \mathit{row}_D(sa)=\mathit{row}_D(s')\}$, the empty string $\lambda$ is the initial state, and $F=\{s : T(s)=\top \wedge s\in W \}$\;
	\lIf{$\EQ(\cA_\mathsf{hyp})=(false, w)$, where $w \in L(\cA_\mathsf{hyp}) \ominus \langL$}{
		Analyse $w$ and add a suffix of $w$ to $D$}
	}
	\KwRet $\cA_\mathsf{hyp}$ as the minimal DFA for $L$\;
	\caption{Active automata learning}\label{alg:lstar}
\end{algorithm}

Algorithm~\ref{alg:lstar} presents Rivest and Schapire's version of L-star (see also \cite{chen2017learning}).
It maintains the equivalence classes induced by $\equiv_\langL$ using a {\em observation table} $(W, D, T)$, where
$W$ is a set of words representing the identified states of the minimal DFA,
$D$ is a set of suffix words distinguishing the states of the minimal DFA,
and $T$ is a mapping from $(W\cup (W\cdot \Sigma))\cdot D$ to $\{\top,\bot\}$, such that $T(w)=\top$ iff $w\in \langL$.
We write $\mathit{row}_D(x) = \mathit{row}_D(y)$ to indicate $\forall z\in D.~T(xz)=T(yz)$, meaning that states associated with the words $x$ and $y$ cannot be distinguished using only words in the set $D$ as suffix words. Observe that $x \equiv_\langL y$ implies $\mathit{row}_D(x) = \mathit{row}_D(y)$ for all $D\subseteq \Sigma^*$.
We say that an observation table is {\em closed} iff it holds that
$\forall x\in W.\,\forall a \in \Sigma.\,\exists y\in W.~\mathit{row}_D(xa)=\mathit{row}_D(y)$.
Intuitively, with a closed table, every state can find its successors with respect to all symbols in $\Sigma$. Initially, $W=D=\{\lambda\}$, and $T(w)=\MEM(w)$ for all $w\in \{\lambda\}\cup \Sigma$.

By construction, two words $x,y$ satisfying $x\equiv_\langL y$ can never be simultaneously contained in the set $W$.
When the equivalence query $\EQ(\cA)$ is $\false$, the teacher provides a counterexample $w\in L(\cA_\mathsf{hyp}) \ominus \langL$, namely the symmetric difference between $L(\cA_\mathsf{hyp})$ and $\langL$. The algorithm then performs a binary search over $w$ to find a suffix $e$ of $w$ such that $\mathit{row}_D(xa)=\mathit{row}_D(y)$ and $\mathit{row}_{D\cup \{e\}}(xa) \neq \mathit{row}_{D\cup \{e\}}(y)$ hold for some $x,y \in W$ and $a \in \Sigma$. By extending $D$ to $D\cup \{e\}$, the algorithm can identify at least one more state needed to recognize the target language $\langL$.
\begin{proposition}[\hspace{1sp}\cite{rivest:inference1993}]
\label{thm:lstar}
Algorithm~\ref{alg:lstar} can identify a minimal DFA $\cA$ that recognizes the language $\langL$. This process requires no more than $n$ equivalence queries and $n^2 + n\cdot|\Sigma| + n\log k$ membership queries, where $n$ represents the number of states in $\cA$ and $k$ is the length of the longest counterexample provided by the teacher.
\end{proposition}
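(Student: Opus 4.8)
The plan is to carefully track the progress of Algorithm~\ref{alg:lstar}, using as the central measure of progress the number of rows in the observation table that are pairwise distinguished by the suffix set $D$, i.e.\ the number of distinct values $\mathit{row}_D(\cdot)$ on $W$. The key invariant I would establish and maintain is that at every point in the execution, (i) all words currently in $W$ have pairwise distinct $D$-rows, so $|W|$ never exceeds $n$, the number of Myhill--Nerode classes; and (ii) whenever the table is closed and a hypothesis $\cA_\mathsf{hyp}$ is built, its state set $W$ is in bijection with a set of pairwise $\equiv_L$-inequivalent words. Correctness of termination then follows from Proposition~\ref{prop:myhill-nerode}: once $|W| = n$ and the table is closed, $\mathit{row}_D$ separates all $n$ classes, the constructed automaton is the (unique) minimal DFA, and the equivalence query must return $\true$.

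Next I would bound the equivalence queries. Each time $\EQ(\cA_\mathsf{hyp})$ returns $(\false, w)$, the binary-search step over $w$ produces a suffix $e$ witnessing $\mathit{row}_D(xa) = \mathit{row}_D(y)$ but $\mathit{row}_{D \cup \{e\}}(xa) \neq \mathit{row}_{D \cup \{e\}}(y)$ for some $x,y \in W$, $a \in \Sigma$; adding $e$ to $D$ therefore strictly increases the number of distinct $D$-rows realized within $W \cup (W\cdot\Sigma)$, hence eventually forces a new element into $W$ during the closing loop. Since $|W|$ is capped at $n$ and must grow after each failed equivalence query, there are at most $n$ equivalence queries. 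For membership queries I would count the entries of $T$ that the algorithm fills: the table domain is $(W \cup W\cdot\Sigma)\cdot D$, so $|T| \le (|W| + |W|\cdot|\Sigma|)\cdot |D| \le (n + n|\Sigma|)\cdot n$ over the whole run, but a sharper accounting separates the contributions — roughly $n^2$ for the $W\times D$ block, $n\cdot|\Sigma|$ for the one new column and one new transition examined per closing step, and $n\log k$ for the $n$ binary searches, each over a counterexample of length at most $k$ and hence costing $O(\log k)$ membership queries. Summing gives $n^2 + n\cdot|\Sigma| + n\log k$.

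I expect the main obstacle to be the membership-query accounting rather than the termination argument: the naive bound from the table size is $O(n^2|\Sigma|)$, and obtaining the claimed $n^2 + n|\Sigma| + n\log k$ requires the Rivest--Schapire observation that new suffixes are discovered by \emph{binary search} (not linear scan) over counterexamples, and that each closedness-fixing step only needs to probe a single fresh $(x,a)$ pair against the existing suffixes plus one fresh suffix — so one must argue that the reused entries of $T$ are never recomputed and that the per-iteration overhead telescopes correctly. The secondary subtlety is verifying that the binary search is well-defined, i.e.\ that for a genuine counterexample $w \in L(\cA_\mathsf{hyp}) \ominus L$ there always exists a breakpoint suffix $e$ of the required form; this follows by comparing, along the decomposition of $w$, the state $\cA_\mathsf{hyp}$ reaches after reading each prefix against the ``true'' access word of that state, and using that the endpoints of $w$ disagree in acceptance while the empty-prefix case agrees — an intermediate-value style argument that I would spell out as a short lemma before the main count.
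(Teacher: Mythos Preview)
Your proposal is correct and takes essentially the same approach as the paper: both bound equivalence queries by noting that each failed $\EQ$ forces a new state into $W$ (capped at $n$ by Myhill--Nerode), and both attribute the $n\log k$ term to binary search over counterexamples. Your outline is in fact more detailed than the paper's brief paragraph, which simply asserts the table-population cost and the final bound without spelling out the telescoping membership-query accounting or the breakpoint-existence argument you anticipate.
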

To see why this proposition is true, note that the algorithm's behavior is governed by two key principles. First, each negative response to an equivalence query extends the learned DFA by at least one state. Since the Myhill-Nerode theorem constrains the candidate DFA's size to $n$ states, the number of equivalence queries is at most $n$. Second, to fully populate the observation table, the algorithm requires at most $n(n+n|\Sigma|)$ membership queries. Since the teacher provides at most $n$ counterexamples, and the algorithm employs binary search to analyze counterexamples of lengths at most $k$, the number of membership queries is bounded by $n^2 + n\!\cdot\!|\Sigma| + n\log k$.

\subsection{Learning regular bisimulations}
We now explain how to employ active automata learning to compute regular bisimulations for weakly finite regular WTSs. To this end,
we first describe a learning procedure under the so-called length-preserving assumption.
\begin{definition}[Length-preserving WTS]
    We say that a WTS $\struct \coloneqq \WTS$ is defined over alphabet $\Sigma$
    if $S$ is a subset of $\Sigma^*$. A WTS $\struct$ defined over $\Sigma$ is \emph{length-preserving} if $\delta(s,a,t) > 0$ only when $\size{s} = \size{t}$.
\end{definition}
Fix a length-preserving regular WTS $\struct \coloneqq \WTS$ over $\Sigma$, and a regular relation $E \subseteq S \times S$. We describe how to learn a regular bisimulation $R \supseteq E$ using the L-star algorithm.
Since L-star requires the target language to be unique, we aim to infer the \emph{greatest} bisimulation, which is the union of all bisimulations. Denote the target language as $\lang{\tilde R}$, the language representation of the greatest bisimulation $\tilde R$ over $\struct$.
\NEW{By the length-preserving assumption, a configuration can only reach configurations of the same size.}
Thus, we can write $\lang{\tilde R} = \bigcup_{n\ge 1} \lang{\tilde R_n}$ such that $\tilde R_n \subseteq \Sigma^n \times \Sigma^n$ is the greatest bisimulation on $\struct$ restricted to configurations of size~$n$.
\begin{figure}[tb]
    \centering
    \scalebox{0.8}{
        \begin{tikzpicture}[punkt/.style={rectangle, rounded corners, draw=black, very thick, text centered}]
        \node[punkt, minimum height=8em,  text width=4em] (learner) {};
        \node[punkt, minimum height=12.5em, text width=11.5em, right=6em of learner] (teacher) {};
        \node[align=left] (learner_text) at ($(learner) + (0, 0)$) {Learner};
        \node[align=left] (teacher_text) at ($(teacher) + (0, 1.7cm)$) {Teacher};
        \node[align=center, right=1em of teacher] (teacher_right) {a bisimulation\\relation $R \supseteq E$\\or\\a non-bisimilar\\pair $(v,u) \in E$};
        \node[rectangle, minimum height=2em, text width=7em, rounded corners, draw=black, text centered] (teacher_mem) at ($(teacher)+(0, 2.5em)$) {Is $w\in \lang{\tilde R_{\size{w}}}$?};
        \node[align=left, rectangle, minimum height=6em, text width=10em, rounded corners, draw=black] (teacher_equ) at ($(teacher)+(0, -2.5em)$)
        {(1) Is $E \subseteq R$ ?\\ (2) Is $R$ a bisimulation?\\(3) If (1) or (2) is neg-\\\hspace{1.2em}ative, find some\\\hspace{1.2em}$w \in \lang{R} \ominus \lang{\tilde R}$};
        \draw ($(learner.east)+(0, 3em)$) edge[-{To[scale=1.5]}] node[above,pos=0.4] { $\MEM(w)$ } ($(teacher_mem.west)+(0, 0.5em)$);
        \draw ($(learner.east)+(0, 2em)$) edge[{To[scale=1.5]}-] node[below,pos=0.4] { $\mathit{yes}$\,/\,$\mathit{no}$ } ($(teacher_mem.west)+(0, -0.5em)$);
        \draw ($(learner.east)+(0,-2em)$) edge[-{To[scale=1.5]}] node[above,pos=0.43] { $\EQ(\cA)$ } ($(teacher_equ.west)+(0,+0.5em)$);
        \draw ($(learner.east)+(0,-3em)$) edge[{To[scale=1.5]}-] node[below,pos=0.43] { $false, w$ } ($(teacher_equ.west)+(0,-0.5em)$);
        \draw ($(teacher_right.west)+(1.4em,0)$) edge[{To[scale=1.5]}-] ($(teacher.east)$) ;
        \end{tikzpicture}
    }
    \caption{An overview of using automata learning to synthesize a bisimulation relation $R \supseteq E$.
        Here, $\ominus$ denotes symmetric set difference, $\tilde R$ is the greatest bisimulation relation, $\tilde R_n$ is $\tilde R$ restricted to configurations of size $n$, and $R$ denotes the relation represented by automata $\cA$.}
    \label{figure:overview}
\end{figure}
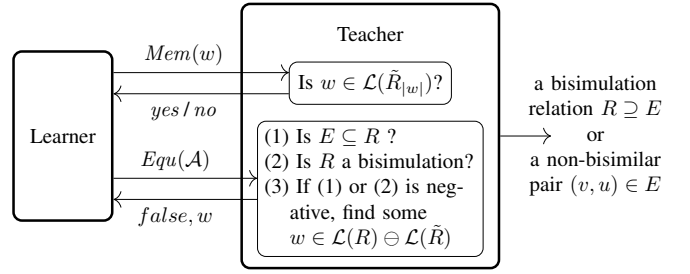

L-star needs a teacher for answering membership queries $Mem(w)$ and equivalence queries $Equ(\cA)$.
Below, we explain how to resolve these queries to learn a regular bisimulation.
\paragraph{Membership queries $Mem(w)$}
  The teacher checks whether $w = v \otimes u$ for some $v,u \in \ialphabet^*$
  such that $(v,u)$ is contained in the greatest bisimulation $\tilde R$.
  Since $\struct$ is length-preserving, this amounts to checking whether $(v,u)$ is contained in $\tilde R_{\size{w}}$.
  As $\tilde R_{\size{w}}$ is defined over a finite-state system, the teacher can answer this query by computing the fixed-length bisimulation~$\tilde R_{\size{w}}$ and checking if $(v,u)$ is in $\tilde R_{\size{w}}$.
\paragraph{Equivalence queries $Equ(\cA)$}
  The teacher checks whether $\cA$ represents a bisimulation including $E$.
  Let $R$ denote the regular relation represented by $\cA$.
  To answer this query, the teacher essentially checks that $R$ satisfies the formula~$\Phi$ at \eqref{def:Phi}, see Algorithm~\ref{alg:eqv}. It first finds a configuration pair violating the formula.
  If no such pair exists, $R$ is a bisimulation containing $E$.
  Suppose a pair $(v,u)$ of size $n$ is found.
  Then, it is a witness of either $E \not\subseteq R$ or $G_R \not\models \Phi$.
  If $(v,u) \in E \setminus \tilde R_n$, we have found a non-bisimilar pair in $E$.
  Otherwise, $(v,u)$ falls in the symmetric difference of $R$ and $\tilde R$.
  It is a valid counterexample since the learner attempts to learn the greatest bisimulation.
  The teacher thus reports $(v,u)$ for the equivalence query.
\begin{algorithm}
    \smallskip
    \KwIn{
        A length-preserving regular WTS $\struct$ over $\ialphabet$;\newline
        Regular relations $R, E \subseteq \ialphabet^* \times \ialphabet^*$;
    }
    \KwResult{
        {$(v,u)$ : a non-bisimilar pair in $E$;} \newline
        {$R$ : a bisimulation on $\struct$ such that $E \subseteq R$;}\newline
        {$(false, v \otimes u)$ : $(v,u)$ violates $R = \tilde R$;}
    }
    \smallskip
    Check whether $E \subseteq R$, and whether $\struct_R \models \Phi$ holds in the sense of Theorem~\ref{th:verify}\;
    \uIf{there exists a counterexample $(v, u)$}{
        Let $n \coloneqq \max \{ \size{v}, \size{u} \}$\;
        Compute $\tilde R_n$, the greatest bisimulation $\tilde R$ restricted to configurations of size $n$\;
        \uIf{$(v, u) \in E \setminus \tilde R_n$}{
            Output $(v ,u)$ as a non-bisimilar pair in $E$\;
            Abort the learning process\;
        }
        \Else{
            \Return{$(false, v \otimes u)$};
        }
    }
    \Else{
        Output $R$ as a bisimulation relation\;
        Abort the learning process\;
    }
    \caption{Answering equivalence queries}
    \label{alg:eqv}
\end{algorithm}

Figure~\ref{figure:overview} outlines our learning procedure for regular bisimulations. This procedure might diverge since bisimulations are not generally computable.
It is guaranteed to terminate when the greatest bisimulation $\tilde R$ is regular, though the produced bisimulation is not necessarily equal to $\tilde R$.
\begin{theorem}[Correctness]
  When the learning procedure terminates, it provides the correct answer regarding whether all configuration pairs in $E$ are bisimilar.
\end{theorem}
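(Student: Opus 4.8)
The plan is a case analysis on how the procedure of Figure~\ref{figure:overview} can halt. Note first that correctness upon termination does not depend on any internal behaviour of the L-star learner: whatever hypothesis automaton is fed to the equivalence-query teacher is processed by Algorithm~\ref{alg:eqv}, and the overall procedure stops precisely at one of the two \emph{abort} statements there. Hence it suffices to establish that both outputs of Algorithm~\ref{alg:eqv} are sound, namely: if it outputs a relation $R$, then every pair in $E$ is bisimilar in $\struct$; and if it outputs a pair $(v,u)$, then $(v,u)\in E$ is a non-bisimilar pair.

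The first output occurs only when the test on the first line of Algorithm~\ref{alg:eqv} finds no counterexample, i.e.\ when $E\subseteq R$ and $\struct_R\models\Phi$. By Theorem~\ref{th:verify}, $\struct_R\models\Phi$ means exactly that $R$ is a bisimulation over $\struct$; since a bisimulation containing a pair is by definition a witness of its bisimilarity, $E\subseteq R$ forces every pair in $E$ to be bisimilar. This disposes of the ``all bisimilar'' answer.

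The second output occurs only when $(v,u)\in E\setminus\tilde R_n$, where $n\coloneqq\max\{\size{v},\size{u}\}$ and $\tilde R_n$ is the greatest bisimulation on the finite WTS obtained by restricting $\struct$ to configurations of size $n$. The load-bearing step -- and the only place where the length-preserving assumption is used -- is the lemma that, for configurations $v,u$ of size $n$, $v$ and $u$ are bisimilar in $\struct$ if and only if $(v,u)\in\tilde R_n$. I would prove this by transporting bisimulations in both directions. For the forward direction, given a bisimulation $R$ on $\struct$ with $(v,u)\in R$, its restriction $R\cap(\Sigma^n\times\Sigma^n)$ is an equivalence relation on $\Sigma^n$ whose classes are exactly $\{E\cap\Sigma^n : E\in S/R\}$; length-preservation makes $\delta(s,a,\cdot)$ supported on $\Sigma^n$ for each $s\in\Sigma^n$, so the mass $s$ sends to $E$ equals the mass it sends to $E\cap\Sigma^n$, and the bisimulation condition~\eqref{defn:pb-proof-rule} for $R$ transfers to the restriction. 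For the converse, a bisimulation $R_n$ on the size-$n$ subsystem extends to a bisimulation on $\struct$ by taking its union with the identity on all configurations of size $\neq n$: the only new classes are singletons outside $\Sigma^n$, and a size-$n$ configuration sends zero mass to every such singleton. Granting the lemma, $(v,u)\notin\tilde R_n$ yields that $v$ and $u$ are not bisimilar in $\struct$, and since $(v,u)\in E$ the reported answer ``$E$ contains a non-bisimilar pair'' is correct; computability of $\tilde R_n$ itself is guaranteed by Proposition~\ref{thm:PML-finite-WTS}.

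I expect the proof of the back-and-forth lemma to be the only non-routine part; everything else is bookkeeping over the two exit points of Algorithm~\ref{alg:eqv}. The subtlety to get right is that equivalence classes of the full-system bisimulation may stretch across configurations of different sizes, so one must check carefully that intersecting with $\Sigma^n$ (resp.\ padding with the identity) preserves both the equivalence-relation structure and the probability-mass balance, which is precisely what length-preservation buys.
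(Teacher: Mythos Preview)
Your proposal is correct and follows the same case analysis as the paper---terminate only at the two abort points of Algorithm~\ref{alg:eqv}, and each output is sound. The paper's own proof is a two-line appeal to correctness-by-construction; you go further by actually proving the lemma that $\tilde R_n$ coincides with $\tilde R\cap(\Sigma^n\times\Sigma^n)$ under length-preservation, which the paper states without proof in the paragraph preceding the membership-query description, so your argument is strictly more detailed but not a different route.
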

\begin{proof}
Note that the learning procedure terminates only when the teacher pinpoints a non-bisimilar pair $(v,u) \in E$ or a bisimulation relation $R$ such that $E \subseteq R$.
Therefore, the procedure always outputs a correct answer on termination.
\end{proof}
\begin{theorem}[Termination]\label{thm:tem}
  When the greatest bisimulation $\tilde R$ is regular, the learning procedure is guaranteed to terminate in at most $m$ iterations, where $m$ is the size of the minimal DFA recognizing $\lang{\tilde R}$.
\end{theorem}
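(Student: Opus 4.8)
The plan is to reduce the statement to the query-complexity bound for the Rivest--Schapire variant of L-star, Proposition~\ref{thm:lstar}. Since each pass through the Repeat-loop of Algorithm~\ref{alg:lstar} issues exactly one equivalence query, it suffices to show that the oracle described in Section~\ref{section:learning} --- the membership resolution together with Algorithm~\ref{alg:eqv} --- is a \emph{faithful} teacher for the regular target language $\lang{\tilde R}$, and that every query it answers is effectively decidable. Granting this, Proposition~\ref{thm:lstar} caps the number of equivalence queries, hence the number of iterations, by the number of states of the minimal DFA for the target, which is $m$ by hypothesis; the procedure can only stop sooner, since Algorithm~\ref{alg:eqv} aborts the whole run the moment it exhibits a non-bisimilar pair in $E$ or a bisimulation $R \supseteq E$.

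First I would check the membership oracle. Because $\struct$ is length-preserving, $\lang{\tilde R} = \bigcup_{n \ge 1}\lang{\tilde R_n}$, and $w \in \lang{\tilde R}$ iff $w = v \otimes u$ with $\size{v} = \size{u} = \size{w}$ and $(v,u) \in \tilde R_{\size{w}}$; the latter is bisimilarity on the restriction of $\struct$ to length-$\size{w}$ configurations, a finite closed subsystem by length-preservation, and is computable by Proposition~\ref{thm:PML-finite-WTS}. Hence the query halts and its answer agrees with membership in $\lang{\tilde R}$. Next the equivalence oracle (Algorithm~\ref{alg:eqv}): given a hypothesis $\cA$ representing a regular relation $R$, the teacher decides $E \subseteq R$ (effective since $E$ is regular) and $\struct_R \models \Phi$ (effective by Theorem~\ref{th:verify}); if both hold, the run terminates with the bisimulation $R$. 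Otherwise it takes a witness $(v,u)$, sets $n \coloneqq \max\{\size{v},\size{u}\}$, computes $\tilde R_n$, and terminates the run if $(v,u) \in E \setminus \tilde R_n$ (a genuinely non-bisimilar pair). In the remaining case I would argue $v \otimes u \in \lang{R} \ominus \lang{\tilde R}$, so it is a legitimate L-star counterexample: if $(v,u)$ witnessed $E \not\subseteq R$ then $(v,u) \in E \setminus R$, and since the preceding test failed we have $(v,u) \in \tilde R_n \subseteq \tilde R$, so $v \otimes u \in \lang{\tilde R} \setminus \lang{R}$; if $(v,u)$ witnessed $\struct_R \not\models \Phi$ then $R \neq \tilde R$ (as $\tilde R$ is itself a bisimulation), and from the violation one effectively extracts a pair in the symmetric difference --- either a non-bisimilar pair related by $R$, or a bisimilar pair not related by $R$ (obtained, in the transfer-failure case, by pulling a configuration out of an $R$-class that is strictly contained in the corresponding $\tilde R$-class), the choice being decidable by testing candidates against the finite relation $\tilde R_n$.

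Putting this together, the teacher is a correct L-star teacher for the regular language $\lang{\tilde R}$, every membership and equivalence query halts, and only finitely many membership queries are issued in total (again by Proposition~\ref{thm:lstar}); so the learning procedure terminates within $m$ equivalence queries, i.e.\ within $m$ iterations. The hard part will be the very last case: when $R$ is an equivalence relation all of whose pairs happen to be bisimilar yet which still fails the transfer condition of~\eqref{def:Phi}, the naive witness to $\struct_R \not\models \Phi$ lies in $\lang{R} \cap \lang{\tilde R}$ and is useless to the learner, so the teacher must instead return a pair genuinely in the symmetric difference and must do so effectively --- which again reduces to computing fixed-length bisimilarity on finite subsystems. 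Once that extraction is pinned down, the remainder is routine bookkeeping on top of Proposition~\ref{thm:lstar}.
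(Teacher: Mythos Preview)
Your approach is the same as the paper's: show that the teacher is a valid L-star teacher for the target $\lang{\tilde R}$ and then invoke Proposition~\ref{thm:lstar} to bound the number of equivalence queries, hence iterations, by $m$. The paper's own proof is two sentences---it simply asserts that every counterexample word reported by the teacher lies in $\lang{R}\ominus\lang{\tilde R}$ and applies Proposition~\ref{thm:lstar}. You do substantially more work to justify that assertion, and in doing so you correctly flag a subtlety the paper elides: when $R\subseteq\tilde R$ is an equivalence relation that nonetheless fails the transfer clause of $\Phi$, the raw witness $(v,u)$ to $\struct_R\not\models\Phi$ sits in $R\cap\tilde R$ and is not a valid L-star counterexample; one must instead descend to the successors and pull out a pair in $R\ominus\tilde R$ among them, using the computable finite-length bisimulation $\tilde R_n$ to decide which side of the symmetric difference to use. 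Your extraction sketch is sound (since the $R$-partition and the $\tilde R$-partition of the boundedly many successors cannot coincide if the summed weights over $R$-classes differ while those over $\tilde R$-classes agree), and the same device handles failures of symmetry and transitivity in $\Phi_{\mathsf{eq}}$. So your argument is correct and strictly more complete than the paper's; what the paper buys is brevity, while what you buy is an honest account of why Algorithm~\ref{alg:eqv}, as written, really can be made to return a word in the symmetric difference in every branch.
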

\begin{proof}
All counterexample words reported by the teacher are contained in the symmetric difference of $\lang{\tilde R}$ and the language recognized by $\cA$. Thus, by Proposition~\ref{thm:lstar}, the learning procedure is guaranteed to terminate when $\lang{\tilde R}$ is regular. Moreover, if $\lang{\tilde R}$ can be recognized by a DFA of $m$ states, the algorithm will terminate in at most $m$ iterations.
\end{proof}
\emph{Optimization with inductive invariants.}
A natural way to optimize the learning procedure is to consider a \emph{regular} inductive invariant $\mathit{Inv}$ such that $\mathit{Inv}$ contains the set of reachable configurations. The optimization is done by simply replacing the greatest finite-length bisimulations~$\tilde R_n$ with the greatest bisimulation $\tilde R_n' \coloneqq \tilde R_n \cap (\mathit{Inv} \times \mathit{Inv})$ on the inductive invariant $Inv$ when answering the membership and equivalence query. Since $\tilde R_n'$ can be much smaller than $\tilde R_n$, replacing $\tilde R_n$ with $\tilde R_n'$ can lead to significant speed-ups. Note that a bisimulation~$R'$ on $\mathit{Inv}$ can be extended to a bisimulation~$R$ on all configurations by setting~$R \coloneqq R' \cup \{(v, v) : v \not\in \mathit{Inv}\}$, which is regular when $R'$ is regular. The inductive invariant~$\mathit{Inv}$ may be specified manually or generated automatically \cite{chen2017learning}.
\subsection{Padding WTS for bisimulation learning}
We proceed to show that it is possible to relax the length-preserving assumption in the previous section using \emph{padding}. Specifically, if the maximal size of reachable configurations is known in initial configurations, then we can reduce bisimulation inference of a weakly finite WTS $\struct$ to that of a length-preserving WTS, which is essentially a padded version of $\struct$.
\begin{definition}[Padded WTS]
    Let $\struct \coloneqq \WTS$ be a WTS over alphabet $\Sigma$.
    For a word $s \in \Sigma^*$, define a language
    $Pad(s) \coloneqq s \cdot \blank^*$.
    Namely, $Pad(s) \subseteq \Sigma_\blank$ is the set of words obtained by appending to $s$ arbitrarily many blank symbols $\blank$.
    We define the \emph{padded variant} of $\struct$ as the length-preserving WTS
    $\struct_{\mathsf{pad}} \coloneqq \langle \tilde S, P, A, \tilde \delta, +\rangle$ over alphabet $\Sigma_\blank$,
    where $\tilde S \coloneqq \bigcup_{s \in S} Pad(s)$ and
    \[
        \tilde \delta(s,a,t) \coloneqq \begin{cases}
        \delta(s',a,t'), & s \in Pad(s'),t \in Pad(t'),\size{s} = \size{t}\,;\\
        0, & otherwise.
        \end{cases}
    \]
\end{definition}
Intuitively, $\tilde S$ is obtained by padding configurations in $S$, and $\tilde \delta_a$ is the length-preserving restriction of $\delta_a$ on these padded configurations. Note that $\struct_{\mathsf{pad}}$ is definable in $\UNIV$ given a regular presentation of $\struct$. Hence, $\struct_{\mathsf{pad}}$ is effectively regular for a regular WTS $\struct$.
The following result shows that we can encode bisimilar configurations in $\struct$ such that the encoded configurations are bisimilar in $\struct_{\mathsf{pad}}$ and vice versa.
\begin{proposition}
\label{prop:padding}
    Let $\struct$ be a weakly finite WTS over alphabet $\Sigma$, and $\struct_{\mathsf{pad}}$ be the padded variant of $\struct$ over alphabet $\Sigma_\blank$.
    Then a relation $E \subseteq \Sigma^* \times \Sigma^*$ consists of bisimilar pairs over $\struct$
    if and only if $\tilde E$ consists of bisimilar pairs over $\struct_{\mathsf{pad}}$, where
    \[
        \tilde E \coloneqq \bigcup_{(v, u)\,\in\,E} \left( (Pad(v) \times Pad(u)) \cap (\Sigma_\blank \times \Sigma_\blank)^{f(v,u)}\right)
    \]
    and $f(v,u) \coloneqq {\max\{n(v), n(u)\}}$ with $n(s) \coloneqq \max \{\, \size{s'} : s'$ is reachable from $s$ in $\struct\,\}$.
\end{proposition}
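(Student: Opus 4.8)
The plan is to establish a tight correspondence between the configurations reachable from $v$ (resp.~$u$) in $\struct$ and the configurations reachable from the padded encoding $v\blank^{k}$ (resp.~$u\blank^{k}$) in $\struct_{\mathsf{pad}}$, and then to transfer bisimulations back and forth along this correspondence. First I would fix $(v,u)\in E$, set $N \coloneqq f(v,u) = \max\{n(v),n(u)\}$, and observe that since $\struct$ is weakly finite, $n(v)$ and $n(u)$ are finite, so $N$ is well-defined. The key structural observation is that, by definition of $\tilde\delta$, a configuration $s\in Pad(s')$ of length exactly $N$ has, for each action $a$, successors precisely of the form $t\blank^{N-\size{t}}$ where $t$ ranges over the $\struct$-successors of $s'$ via $a$, with the same weights $\delta(s',a,t)$; and crucially all these successors again have length $N$ (because $\size{t}\le n(s')\le N$ for every $\struct$-successor $t$ of $s'$, by definition of $n$). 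Hence the set $\{\, s'\blank^{N-\size{s'}} : s' \text{ reachable from } v \text{ in } \struct \,\}$ is closed under $\tilde\delta$-successors, and the map $\iota_N : s' \mapsto s'\blank^{N-\size{s'}}$ is a weight-preserving isomorphism between the sub-WTS of $\struct$ reachable from $v$ (resp.~$u$) and the sub-WTS of $\struct_{\mathsf{pad}}$ reachable from $\iota_N(v)$ (resp.~$\iota_N(u)$) — injectivity is immediate since padding is reversible, and the branching bound and action set are unchanged.

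Given this isomorphism, the forward direction proceeds as follows. Suppose $E$ consists of bisimilar pairs over $\struct$, and take any $(v',u')\in\tilde E$; by construction $v' = \iota_N(\hat v)$, $u' = \iota_N(\hat u)$ for some $(\hat v,\hat u)$ with $v'[1..\size{v}] $-prefix $v$ etc., and more to the point $v' \in Pad(v)$, $u'\in Pad(u)$ with $\size{v'}=\size{u'}=N$; since $(v,u)$ is bisimilar in $\struct$ there is a bisimulation $R$ over $\struct$ with $(v,u)\in R$. Restrict $R$ to the (finite) set of configurations reachable from $v$ or $u$, push it through $\iota_N$ to obtain a relation $R^{\mathsf{pad}}$ on the reachable part of $\struct_{\mathsf{pad}}$, and extend it by the identity on everything else; using Proposition~\ref{thm:PML-finite-WTS} (bisimilarity = PML-equivalence, and only finitely many configurations matter for a given PML formula on a bounded-branching system) together with the isomorphism, $R^{\mathsf{pad}}$ is a bisimulation over $\struct_{\mathsf{pad}}$ relating $v'$ to $u'$. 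Here I need the small extra fact that $\iota_N(v)$ is bisimilar to $v'$ itself whenever $v'\in Pad(v)$ and $\size{v'}=N$ — but that is the same statement applied with $v'$ in place of $v$, and since $v'$ and $\iota_N(v)$ have isomorphic reachable sub-WTSs (both isomorphic to the reachable part of $v$ in $\struct$), they are trivially bisimilar. The converse direction is symmetric: a bisimulation over $\struct_{\mathsf{pad}}$ relating $\iota_N(v)$ and $\iota_N(u)$, restricted to the reachable part and pulled back along $\iota_N^{-1}$, yields a bisimulation over $\struct$ relating $v$ and $u$; and since $(v',u')\in\tilde E$ was arbitrary with $v'\in Pad(v),u'\in Pad(u)$, every such $(v',u')$ being bisimilar in $\struct_{\mathsf{pad}}$ forces $(v,u)$ to be bisimilar in $\struct$.

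The main obstacle I anticipate is making the ``closure under successors'' argument fully rigorous across the length-threshold $N$: one must check that padding to length exactly $N$ (rather than to varying lengths) does not lose any transitions — i.e.~that $\tilde\delta(s',a,t')$ with $\size{s'}=\size{t'}=N$ really does recover $\delta$ restricted to $v$'s (or $u$'s) reachable component — and that no spurious transitions are introduced between the $v$-component and the $u$-component (this holds because $S\cap S'$-style disjointness is not assumed here, so one should argue within a single $\struct$ and note the reachable sets of $v$ and $u$ need not be disjoint, which is harmless since we only ever relate configurations, never merge them). A secondary subtlety is that $\struct_{\mathsf{pad}}$ contains configurations of every length $\ge \size{s'}$ in $Pad(s')$, not just length $N$; these longer paddings are not reachable from $\iota_N(v)$ and play no role, but one must say explicitly that $\tilde E$ only selects the length-$N$ slice (the $\cap(\Sigma_\blank\times\Sigma_\blank)^{f(v,u)}$ in the statement), so the correspondence is with that slice alone. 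Everything else is routine bookkeeping with Proposition~\ref{thm:PML-finite-WTS} and the definitions of $\struct_{\mathsf{pad}}$ and $\tilde E$.
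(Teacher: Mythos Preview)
Your proposal is correct and follows essentially the same approach as the paper: fix $(v,u)\in E$, set $N=f(v,u)$, and argue that the padding map $\iota_N:s'\mapsto s'\blank^{N-\size{s'}}$ gives a weight-preserving isomorphism between the reachable sub-WTS of $\struct$ from $\{v,u\}$ and that of $\struct_{\mathsf{pad}}$ from $\{\iota_N(v),\iota_N(u)\}$, so bisimulations transfer in both directions. Two minor simplifications: the appeal to Proposition~\ref{thm:PML-finite-WTS} is unnecessary (the isomorphism lets you push/pull bisimulation relations directly), and your worry about $v'$ versus $\iota_N(v)$ dissolves once you note that $Pad(v)\cap\Sigma_\blank^N$ is the singleton $\{\iota_N(v)\}$, so the length-$N$ slice of $\tilde E$ coming from $(v,u)$ is exactly the one pair $(\iota_N(v),\iota_N(u))$.
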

\begin{proof}
    Consider a pair of configurations $(v,u) \in E$ and let $n \coloneqq {\max\{n(v), n(u)\}}$. Then $n < \infty$ since $\struct$ is weakly finite. Consider the pair of padded configurations $(v', u') \in \tilde E \cap (Pad(v) \times Pad(u))$ by padding $v$ and $u$ to size $n$. Observe that every bisimilar pair of configurations reachable from $(v,u)$ in $\struct$ has its padded version in $\tilde R_n$ and vice versa.
    Therefore, $v$ and $u$ are bisimilar in $\struct$ if and only if $(v',u') \in \tilde R_n \subseteq \tilde R$. Similarly, for each $(v',u') \in \tilde E$, $v'$ and $u'$ are bisimilar in $\struct_{\mathsf{pad}}$ if and only if their unpadded counterparts $v$ and $u$ are bisimilar in $\struct$. This concludes our proof.
\end{proof}
By Proposition~\ref{prop:padding}, checking that $E$ consists of bisimilar pairs in $\struct$ amounts to
checking that its padded version $\tilde E$ consists of bisimilar pairs in $\struct_{\mathsf{pad}}$.
To translate $E$ to $\tilde E$, we need to know $n(v)$ and $n(u)$ (i.e., the largest sizes of the configurations reachable from $v$ and $u$, respectively) for each pair $(v,u) \in E$.
In practice, given a regular presentation of a parameterized system $\{ P_n \}_{n \in N}$, we can often compute $\tilde E$ from $E$ by encoding the parameter $n \in N$ using padding \cite{lin2016liveness,chen2017learning,markgraf2020parameterized}. All the examples discussed in our case study can be adapted to this encoding.

\section{A framework for anonymity and uniformity verification}
\label{sec:anonymity-examples}
\NEW{In this section, we introduce how probabilistic bisimulation can be employed to reason about anonymity and uniformity of Markov decision processes (MDPs) and Markov chains.}
Fix an MDP $\struct \coloneqq \WTS$.
Recall that a path of $\struct$ is a sequence $s_0 \to_{a_1} s_1 \to_{a_2} \cdots$ such that $\delta(s_{i-1}, a_{i}, s_{i}) \neq 0$ for each $i$.
We will use $\pi(\struct)$ to denote the set of \emph{finite} paths of~$\struct$,
and use $\mathcal{D}_{A}$ to denote the set of probability distributions over $A$.
An \emph{adversary} $f: \pi(\struct) \to \mathcal{D}_{A}$
resolves the nondeterministic choices of $\struct$ and
induces a WTS $\struct_f \coloneqq \langle \tilde{S}, P, \tilde{A}, \tilde{\delta}, +\rangle$,
where $\tilde{S} \coloneqq \pi(\struct)$ and $\tilde{A} \coloneqq A \uplus \{\alpha\}$ with a dummy action $\alpha$.
The transition function $\tilde{\delta} : \tilde{S} \times \tilde{A} \times \tilde{S} \to P$ is defined such that for any paths $\pi \coloneqq s_0 \to_{a_1} \cdots \to_{a_n} s_n$ and $\pi' \coloneqq s_0 \to_{a_1} \cdots \to_{a_n} s_n \to_{a} s_{n+1}$
in $\pi(\struct)$ (i.e., $\pi'$ is a path extending $\pi$ with one more transition $s_{n} \to_{a} s_{n+1}$),
it holds that
$\tilde{\delta}(\pi, a, \pi') = f(\pi)(a) \cdot \delta(s_n, a, s_{n+1})$,
where $f(\pi)(a)$ denotes the probability of the adversary selecting action $a$ given the path $\pi$.
We stipulate that $f(\pi)(a) \neq 0$ only if $\delta(s_n, a, s)\neq 0$ for some $s \in S$,
i.e., the adversary can only select those actions that lead to a successor configuration in $\struct$.
We set $\tilde{\delta}(\pi, \alpha, \pi) = 1$ for every $\pi \in \tilde{S}$ ending with a terminal configuration of $\struct$.
Intuitively, $\struct_f$ describes the behavior of $\struct$ under the adversary $f$
by successively extending the paths of $\struct$ according to the actions selected by $f$.
If a path reaches a terminal configuration, it will loop there through the dummy action $\alpha$.

$\struct_f$ is a Markov chain since $\sum_{a \in A'} \sum_{\pi' \in \tilde{S}} \tilde{\delta}(\pi,a,\pi') = 1$ for every $\pi \in \tilde{S}$.
The paths of $\struct_f$ induce a probability measure that can be formalized using standard cylinder construction~\cite{woess2009denumerable}.
\OMIT{
Briefly speaking, a finite path $\ModelRun \coloneqq s_0 \to_{a_1} \cdots \to_{a_n} s_n$ of $\struct_f$
defines a \emph{basic cylinder} $Run_{\ModelRun}$, which is the set of all infinite paths of $\struct_f$ that has $\ModelRun$ as a prefix.
We associate this cylinder with the probability
$
\Pr(Run_{\ModelRun} \mid \struct_f) \coloneqq \prod_{i=1}^{n} \tilde{\delta}(s_{i-1}, a_i, s_{i}).
$
This definition gives rise to a unique probability measure for the $\sigma$-algebra
over the set of paths starting from $s_0$.
}
\NEW{Given a finite path $\ModelRun \coloneqq s_0 \to_{a_1} \cdots \to_{a_n} s_n$,
we refer to $tr(\ModelRun) \coloneqq a_1 \cdots a_n \in A^*$ as the \emph{trace} of $\ModelRun$.
We call a set of traces $\mathcal T \subseteq A^*$ a \emph{trace event}.
Let $Run_{\ModelRun}$ be the set of infinite paths with prefix $\ModelRun$.}
The probability of a trace event~$\mathcal T$ with respect to a source state $s$ is given by
\[
    \Pr\nolimits^s(\mathcal T \mid \struct_f) \coloneqq
    \Pr(\bigcup \left\{Run_\pi : tr(\pi)\in\mathcal T,
    \mbox{$s = \ModelRun[0]$} \right\} \mid \struct_f).
\]
For simplicity, we shall write $\Pr^s(\mathcal T \mid \struct_f)$ as $\Pr^s(\tau \mid \struct_f)$
when $\mathcal T = \{\tau\}$ for a single trace $\tau$.
\subsection{Anonymity verification}\label{sec:anonymity}
We will now describe how to verify anonymity properties using bisimulation.
Fix a set $I \subset S$ of initial configurations. An MDP $\struct \coloneqq \WTS$ is \emph{anonymous to an adversary $f$} if for every initial configuration $s \in I$ and trace event $\mathcal T$, the probability $\Pr^{s}(\mathcal T \mid \struct_f)$ is solely determined by $\mathcal T$ (and thus independent of $s$).
Intuitively, this means that the adversary cannot infer any information about a specific initial configuration by experimenting with the system and observing the traces.
An adversary $f: \pi(\struct) \to \mathcal{D}_{A}$ is \emph{observational} if
$f(\pi) = f(\pi')$ whenever $tr(\pi) = tr(\pi')$.
That is, the adversary has no access to the system's internal state and determines an action solely based on the trace observed thus far.
An MDP is \emph{anonymous} if it is anonymous to any observational adversary.
The following result establishes a connection between anonymity and bisimilarity.
\begin{theorem}\label{thm:anonymity}
    Let $\struct \coloneqq \WTS$ be an MDP and $f$ be an observational adversary.
    Suppose that $R \subseteq S \times S$ is a bisimulation over $\struct$.
    Then for any $(u,v) \in R$ and trace event $\mathcal T$,
    we have $\Pr^v(\mathcal T \mid \struct_f) = \Pr^u(\mathcal T \mid \struct_f)$.
    That is, $u$ and $v$ induce the same trace distribution under the intervention of~$f$.
\end{theorem}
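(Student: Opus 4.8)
The plan is to transfer the bisimulation $R$ from the MDP $\struct$ to the state space of the Markov chain $\struct_f$ induced by $f$, verify that the transferred relation is a bisimulation over $\struct_f$, and then argue that bisimilar states of $\struct_f$ yield the same probability for every trace event.

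\emph{Lifting and verifying.} The states of $\struct_f$ are the finite paths $\pi(\struct)$; write $\mathrm{last}(\pi)$ for the final configuration of $\pi$. Define $\hat R$ on $\pi(\struct)$ by $\pi \mathrel{\hat R} \pi'$ iff $tr(\pi) = tr(\pi')$ and $(\mathrm{last}(\pi), \mathrm{last}(\pi')) \in R$. As $R$ and the equal-trace relation are equivalence relations, so is $\hat R$, and for $(u,v) \in R$ the trivial paths $u$ and $v$ (both of trace $\varepsilon$) satisfy $u \mathrel{\hat R} v$. I would then check that $\hat R$ satisfies (\ref{defn:pb-proof-rule}) over $\struct_f$. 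Summing (\ref{defn:pb-proof-rule}) for $R$ over all classes of $S/R$ shows that $\mathrm{last}(\pi)$ and $\mathrm{last}(\pi')$ have equal total outgoing weight for each action, hence $\pi$ ends in a terminal configuration iff $\pi'$ does; this handles the dummy action $\alpha$ (either both self-loops send mass $1$ to the common $\hat R$-class of $\pi, \pi'$, or both are trivial). For $a \in A$, every successor of $\pi$ via $a$ has trace $tr(\pi)\,a$ and an endpoint $t$ with $\delta(\mathrm{last}(\pi),a,t) \neq 0$; two such extensions (of $\pi$ or of $\pi'$, the traces being equal) are $\hat R$-related precisely when their endpoints are $R$-related, so the $\hat R$-class corresponding to an $R$-class $C$ receives mass $f(\pi)(a)\sum_{t \in C}\delta(\mathrm{last}(\pi),a,t)$ from $\pi$ and $f(\pi')(a)\sum_{t \in C}\delta(\mathrm{last}(\pi'),a,t)$ from $\pi'$. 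These coincide because $f$ observational and $tr(\pi) = tr(\pi')$ force $f(\pi)(a) = f(\pi')(a)$, and (\ref{defn:pb-proof-rule}) for $R$ applied to $(\mathrm{last}(\pi), \mathrm{last}(\pi')) \in R$ with action $a$ and class $C$ equates the two sums. Hence $\hat R$ is a bisimulation over $\struct_f$, and $u, v$ are bisimilar in $\struct_f$.

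\emph{From bisimilarity in $\struct_f$ to trace events.} I would next prove, by induction on $k$, that for $\hat R$-related $\sigma_1, \sigma_2$, any $\tau \in A^k$, and any union $\mathcal U$ of $\hat R$-classes, the probability that the first $k$ steps from $\sigma_i$ in $\struct_f$ produce trace $\tau$ and end in $\mathcal U$ is independent of $i$. The base case is immediate since $\mathcal U$ is $\hat R$-closed; the inductive step splits the first transition over $\hat R$-classes, using the bisimulation property just established for the per-class one-step masses and the induction hypothesis for the continuations. Finally, the event underlying $\Pr^{u}(\mathcal T \mid \struct_f)$, namely that some finite prefix of the run has trace in $\mathcal T$, is the increasing union over $n$ of the events $A^{\le n}_{\mathcal T}$ ``some prefix of length $\le n$ has trace in $\mathcal T$''; by continuity of measure it suffices to match $\Pr(A^{\le n}_{\mathcal T})$ from $u$ and from $v$. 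But each such probability is a finite sum of the step-indexed quantities above, obtained by grouping runs according to the trace of their first $n$ steps and, for runs that reach a terminal configuration earlier, according to the set of terminal configurations (a union of $\hat R$-classes, by the argument above). Since $u \mathrel{\hat R} v$, all these summands agree, so $\Pr^u(\mathcal T \mid \struct_f) = \Pr^v(\mathcal T \mid \struct_f)$.

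The lifting and its verification are essentially forced once one has (i) that an observational adversary acts identically on paths with the same trace and (ii) the defining equation (\ref{defn:pb-proof-rule}) of $R$; I expect the main obstacle to be the last part --- relating the abstractly defined quantity $\Pr^{s}(\mathcal T \mid \struct_f)$ to endpoint-class-refined, bounded-horizon trace probabilities, running the continuity/$\sigma$-additivity argument, and doing the bookkeeping around the dummy action $\alpha$ and terminal configurations (which must be checked to be respected by $R$).
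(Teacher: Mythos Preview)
Your proposal is correct and proceeds along a genuinely different, more structural route than the paper. You lift $R$ to a relation $\hat R$ on the path space $\pi(\struct)$ (the state space of $\struct_f$), verify that $\hat R$ is a bisimulation over the Markov chain $\struct_f$, and then invoke the general principle that bisimilar states of a Markov chain induce identical trace distributions, handling arbitrary $\mathcal T$ via a continuity-of-measure argument. The paper instead never names the lifted relation: it first observes that one may take $\mathcal T$ prefix-free and hence reduce to a single trace $\tau$, and then runs a direct induction on $|\tau|$, introducing at each step a ``shifted'' observational adversary $f_{s,a}$ (defined by $f_{s,a}(\rho) \coloneqq f(s \to_a \rho)$) so that the inductive hypothesis applies to the tail $\tau'$. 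The class-by-class decomposition in the paper's inductive step is exactly your $\hat R$-class decomposition specialised to one transition, so the combinatorial core is the same; the difference is packaging. Your approach buys modularity --- the lifted bisimulation on $\struct_f$ is a reusable artefact, and the ``bisimilarity $\Rightarrow$ equal trace law'' step is a general lemma --- at the cost of the extra bookkeeping you anticipate around $\alpha$, terminal states, and the $\sigma$-additivity passage. The paper's prefix-free reduction is the slicker way to discharge precisely that last obstacle: once $\mathcal T$ is prefix-free, $\Pr^s(\mathcal T \mid \struct_f) = \sum_{\tau \in \mathcal T} \Pr^s(\tau \mid \struct_f)$ with disjoint cylinders, and each summand is a purely finite-horizon quantity, so no endpoint-class refinement or limiting argument is needed. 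You could graft that reduction onto your lifted-bisimulation proof and drop most of your final paragraph.
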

\begin{proof}
    Fix a trace event $\mathcal T \subseteq A^*$. We can assume w.l.o.g.~that $\mathcal T$ is prefix-free.
    Indeed, if $\tau,\tau'\in \mathcal T$ and $\tau$ is a prefix of $\tau'$, then we can remove $\tau'$
    without changing the probability of $\mathcal T$.
    Thus, we have $\Pr^s(\mathcal T \mid \struct_f) = \sum_{\,\tau \in \mathcal T} \Pr^s(\tau \mid \struct_f)$,
    and it suffices to prove this theorem for the case $\mathcal T = \{\tau\}$.
    We prove it by induction on the length of $\tau$.
    For the base case, suppose that $\tau = a\in A$.
    Since $u$ and $v$ are bisimilar, we have $\sum_{s \in S} \delta(u, a, s) = \sum_{s \in S} \delta(v, a, s)$.
    It follows that
    	$\Pr\nolimits^u(\tau \mid \struct_f) = f(u)(a)\cdot \sum_{s\,\in\,S} \delta(u, a, s)
    	= f(v)(a) \cdot \sum_{s\,\in\,S} \delta(v, a, s)
    	= \Pr\nolimits^v(\tau \mid \struct_f)$
    by the definition of observational adversary.
    For the induction step, suppose that the hypothesis holds for $\size{\tau} = n$.
    Consider a trace $\tau \coloneqq a \cdot \tau'$ with $a \in A$ and $\tau' \in A^n$.
    For each $s \in S$, define an observational adversary $f_{s,a}$ such that
    \[
        f_{s,a}(s_0 \to_{a_1} \cdots \to_{a_n} s_n) \coloneqq f(s \to_a s_0 \to_{a_1} \cdots \to_{a_n} s_n).
    \]
    Then we have
    \begingroup
    \allowdisplaybreaks
    \begin{align*}
        & \Pr\nolimits^{u}(\tau \mid \struct_f)\\
        & = f(u)(a) \sum_{s\,\in\, S} \delta(u, a, s) \cdot \Pr\nolimits^{s}(\tau' \mid \struct_{f_{u,a}}) \tag*{(def.)}\\
        & = f(u)(a) \sum_{[s]\,\in\, S/R} (\sum_{w \,\in\, [s]} \delta(u, a, w)) \cdot \Pr\nolimits^{s}(\tau' \mid \struct_{f_{u,a}}) \tag*{(hypo.)} \\
        & = f(v)(a) \sum_{[s]\,\in\, S/R} (\sum_{w \,\in\, [s]} \delta(v, a, w)) \cdot \Pr\nolimits^{s}(\tau' \mid \struct_{f_{v,a}}) \tag*{(bisim.)} \\
        & = f(v)(a) \sum_{s\,\in\, S} \delta(v, a, s) \cdot \Pr\nolimits^{s}(\tau' \mid \struct_{f_{v,a}}) \tag*{(hypo.)} \\
        & = \Pr\nolimits^{v}(\tau \mid \struct_{f}), \tag*{(def.)}
    \end{align*}
    \endgroup
    where
    $S/R$ denotes the set of equivalence classes induced by the bisimulation $R$, and
    $[s] \coloneqq \{ s' \in S : (s,s') \in R \}$ is the equivalence class represented by $s$.
    Therefore the hypothesis also holds for $\size{\tau} = n + 1$. The statement hence follows by mathematical induction.
\end{proof}
Based on Theorem~\ref{thm:anonymity}, we may verify the anonymity of an MDP $\struct \coloneqq \WTS$ as follows.
Given a set $I \subseteq S$ of initial states, we specify a reference system $\struct' \coloneqq \langle S,P,A,\delta',+\rangle$ such that the trace distribution of $\struct'$ is independent of specific initial states regardless of the intervention of any observational adversary $f$.
We then find a bisimulation relation $R$ between $\struct$ and $\struct'$
such that $R \cap (I \times I)$ coincides with the identity relation over $I$.
When such a relation $R$ is found, we can conclude that the trace distribution of $\struct_f$ is also independent of the initial states for any observational adversary $f$, thereby proving the anonymity property of $\struct$.
\subsection{Uniformity verification}
\label{sec:uniformity-examples}
Bisimilarity preserves bounded termination: if two systems $\struct$ and $\struct'$ are bisimilar, then $\struct$ terminates in $n$ steps iff $\struct'$ does. We can generalize this fact and use bisimilarity to establish uniform output distributions for probabilistic programs.
\begin{definition}[\hspace{1sp}\cite{barthe2020foundations}]
    A \emph{probabilistic program} is defined by a triple $\mathcal{P} \coloneqq (\struct, I, \{F_s\}_{s\in I})$, where $\struct \coloneqq \langle S, P, \{a\}, \delta, +\rangle$ is a Markov chain with dummy action $a$, $I \subseteq S$ is a set of initial configurations, and $F_s \subseteq S$ is a set of final configurations for each $s \in I$. For simplicity, we will write $\struct$ as $\WTSS$ with probabilistic transition function $\delta: S \times S \to P$.
\end{definition}
Starting from an initial configuration $s \in I$, the probabilistic program $\mathcal{P}$ terminates when the Markov chain $\struct$ reaches some final configuration $s' \in F_s$.
The \emph{uniformity property} of $\mathcal{P}$
asserts that, from each initial configuration $s\in I$, the program has the same probability of reaching each final configuration in $F_s$ on termination. In other words, the reachability distribution over $F_s$ is uniform for each initial configuration $s \in I$.

For a Markov chain $\struct \coloneqq \WTSS$, let $\rev{\struct} \coloneqq \langle S,P,\rev{\delta} \rangle$ be a WTS where $\rev{\delta}(s,t) \coloneqq \delta(t,s)$ for all $s,t \in S$.
The following result relates uniform reachability distribution of $\struct$ to bisimilarity over $\rev{\struct}$.
\begin{theorem}
\label{thm:uniformity-proof-rule}
Let $\struct \coloneqq \WTSS$ be a Markov chain, $s_{0} \in S$ be an initial configuration, and $F \subseteq S$ be a set of final configurations.
Then $s_0$ has a uniform reachability probability over $F$ if there exists a bisimulation relation $R\subseteq S\times S$ over $\rev{\struct}$ such that (i) $F\times F \subseteq R$, and (ii) $s_0$ is bisimilar only to itself with respect to $R$.
\end{theorem}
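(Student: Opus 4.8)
The plan is to reduce the uniformity claim to a statement about the \emph{flow} of probability mass in $\struct$. For $s \in S$ and $n \ge 0$, let $p_n(s)$ denote the total weight $\sum \prod_{i} \delta(w_{i-1},w_i)$ taken over all length-$n$ paths $s_0 = w_0 \to w_1 \to \cdots \to w_n = s$ in $\struct$. (Note that $p_n(s)$ is also the weight of all length-$n$ paths from $s$ to $s_0$ in $\rev\struct$, which is precisely why reversing the chain is the right move.) I would show that $p_n$ is constant on every equivalence class of $R$ for every $n$, and then specialize to the class $F$.

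First I would reformulate ``$R$ is a bisimulation over $\rev\struct$'' in terms of $\struct$. Since $\rev\delta(s,t) = \delta(t,s)$, condition~\eqref{defn:pb-proof-rule} instantiated at $\rev\struct$ says exactly that $R$ is an equivalence relation and that, for every $(s,s') \in R$ and every class $E \in S/R$,
\[
    \sum_{w \in E} \delta(w,s) \;=\; \sum_{w \in E} \delta(w,s'),
\]
i.e.\ $s$ and $s'$ receive the same total incoming weight from each $R$-class of the \emph{forward} chain. This ``incoming-flow'' property, together with the hypotheses (i) $F \times F \subseteq R$ and (ii) $[s_0]_R = \{s_0\}$, is all I will use.

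Next comes an induction on $n$ establishing $p_n(s) = p_n(s')$ whenever $(s,s') \in R$. For $n = 0$ we have $p_0(s) = 1$ if $s = s_0$ and $p_0(s) = 0$ otherwise; since $R$ is symmetric, hypothesis (ii) forces $s = s_0 \Leftrightarrow s' = s_0$, so $p_0$ is class-constant. For the step, I would use the forward recurrence $p_{n+1}(s) = \sum_{w \in S} p_n(w)\,\delta(w,s)$, regroup the sum by $R$-classes, and apply the induction hypothesis to write $p_n(w) = p_n(E)$ for a value depending only on the class $E \ni w$:
\[
    p_{n+1}(s) \;=\; \sum_{E \in S/R} p_n(E)\sum_{w \in E} \delta(w,s) .
\]
The incoming-flow identity then gives $p_{n+1}(s) = p_{n+1}(s')$. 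All the sums involved have non-negative terms and are dominated by $q^{n}$, so these regroupings are unproblematic. Hence $p_n$ is class-constant for every $n$, and by (i) we get $p_n(t) = p_n(t')$ for all $t,t' \in F$ and all $n$.

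It remains to convert this into a statement about reachability probabilities, and this is where I expect the real work to lie rather than in the (routine) induction. Passing from weights to genuine transition probabilities $\delta(\cdot,\cdot)/q$ merely rescales each $p_n$ and preserves all the equalities above, so one may read $p_n(t)$ as the probability that the run started at $s_0$ occupies $t$ at step $n$. Because the configurations in $F$ are \emph{final}, hence absorbing, a run that enters some $t \in F$ never leaves it; thus $p_n(t)$ is non-decreasing in $n$ and its limit is exactly the probability that $t$ is the (first and only) $F$-configuration ever reached from $s_0$. Equality of the sequences $(p_n(t))_{n}$ over $t \in F$ then yields a uniform reachability distribution over $F$, as claimed. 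The delicate points are making precise what ``reachability distribution over $F$'' means for a Markov chain whose final states need not be explicitly terminal, and justifying the exchange of limit and summation --- both handled by the absorbing nature of $F$; a minor prerequisite is that $S/R$ and the class-indexed sums are well-formed, which holds because a bisimulation is, by this paper's definition, an equivalence relation.
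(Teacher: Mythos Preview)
Your proposal is correct and follows essentially the same approach as the paper: both define the step-$n$ occupation measure $p_n(\cdot)$, prove by induction on $n$ that $p_n$ is constant on $R$-classes using the reversed-bisimulation identity to regroup the forward recurrence $p_{n+1}(s)=\sum_w p_n(w)\,\delta(w,s)$ by classes, and then specialize to $F$ via hypothesis~(i). The only noteworthy difference is in the last step: the paper simply asserts that $\sum_{n\ge 0} p_n(s)$ is the reachability probability of $s$, whereas you argue via monotone limits of $p_n(t)$ under an absorbing interpretation of the final states; your version is arguably more careful about what ``reachability probability'' means, but the underlying structure of the argument is the same.
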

\begin{proof}
Suppose $R$ is a bisimulation satisfying conditions (i) and (ii).
Let $S_n \coloneqq \{s \in S:$ there is a path $\pi$ from $s_0$ to $s$ such that $|\pi|=n\}$.
It is not hard to show (e.g., by induction on $n$) that if $(u,v) \in R$, then $u \in S_n \Leftrightarrow v \in S_n$ for $n\ge 0$.
Furthermore, for $s \in S$ and $n \ge 0$, define
$p_n(s) \coloneqq \sum \left\{ \Pr(Run_\pi) : \pi~\text{is a path from $s_0$ to $s$ with $|\pi|=n$}\right\}$,
i.e., $p_n(s)$ is the probability that $\struct$ reaches $s$ from $s_0$ after making precisely $n$ transitions.
Now, we claim that for each $(u,v) \in R$, it holds that $p_n(u) = p_n(v)$ for $n\ge 0$.
We prove this claim by induction on $n$.
The base case follows from condition (ii), since $n=0$ implies that $u = v = s_0$.
For the induction step, suppose that $(u,v) \in R$.
Since $u \in S_{n+1} \Leftrightarrow v \in S_{n+1}$, either both $u$ and $v$ are not in $S_{n+1}$,
or both $u$ and $v$ are in $S_{n+1}$.
In the former case, we have $p_{n+1}(u) = p_{n+1}(v) = 0$. In the latter case, we have
\begingroup
\allowdisplaybreaks
\begin{align*}
    p_{n+1}(u) & = \sum_{s\,\in\,S} p_n(s) \cdot \delta^{-1}(u,s) \tag*{(def.)}\\
        & = \sum_{[s]\,\in\,S/R} p_{n}(s) \cdot \sum_{t\,\in\,[s]} \delta^{-1}(u,t) \tag*{(hypo.)}\\
        & = \sum_{[s]\,\in\,S/R} p_{n}(s) \cdot \sum_{t\,\in\,[s]} \delta^{-1}(v,t) \tag*{(bisim.)}\\
        & = \sum_{s\,\in\,S} p_n(s) \cdot \delta^{-1}(v,s) \tag*{(hypo.)}\\
        & = p_{n+1}(v). \tag*{(def.)}
\end{align*}
\endgroup
Here, $S/R$ denotes the set of equivalence classes induced by the bisimulation $R$, and
$[s] \coloneqq \{ s' \in S : (s,s') \in R \}$ is the equivalence class represented by $s$.
By mathematical induction, we see that the hypothesis $p_n(u) = p_n(v)$ holds for all $n\ge 0$.
Finally, note that $\sum_{n\ge 0} p_n(s)$ is the reachability probability of configuration $s$. By condition (i), we have $(u,v) \in R$ for all $u,v \in F$. Thus, $p_n(u)=p_n(v)$ holds for all $u,v \in F$ and $n\ge 0$. It follows that all configurations in $F$ have the same reachability probability.
\end{proof}
\OMIT{
We say that a probabilistic program $\mathcal{P} \coloneqq (\struct, I, \{F_s\}_{s\in I})$ is \emph{regular} if $\struct$ and $I$ are regular, and
there is a regular relation $Z \subseteq S \times S$ such that $(s,s') \in Z$ if and only if $s' \in F_s$.
Technically, we can assume w.l.o.g.~that, starting from each $s \in I$, $\struct$ reaches a configuration $s'\in F \coloneqq \bigcup_{s\in I} F_s$ only if $s' \in F_s$. This assumption can be fulfilled by augmenting the Markov chain $\struct$ as follows. First, we memorize the initial configuration on each path. Thus, a path $s_0 \to s_1 \to s_2 \to \cdots$ of $\struct$ will correspond to the path $(s_0,s_0) \to (s_0,s_1) \to (s_0,s_2) \to \cdots$ of the augmented Markov chain. Second, we specify $\{(s,s) : s \in I\}$ and $Z$ as the sets of initial and final configurations, respectively. The resulting probabilistic program is regular as long as the original program $\mathcal{P}$ is regular.
By Theorem~\ref{thm:uniformity-proof-rule}, checking uniformity of c program amounts to finding a bisimulation relation $R$ over $\rev{\struct}$ such that (i) $F\times F \subseteq R$, where $F \coloneqq \bigcup_{s\in I} F_s$, and (ii) $R \cap I = \{ (s,s) : s \in I \}$.
When $\mathcal{P}$ is regular, it is decidable to check whether a candidate regular bisimulation $R$ satisfies the verification conditions (i) and (ii).
}
We say that a probabilistic program $\mathcal{P} \coloneqq (\struct, I, \{F_s\}_{s\in I})$ is \emph{regular} if $\struct$ and $I$ are regular, and
the relation $E \coloneqq \{ (v,u) \in S \times S : v,u \in F_s$ for some $s \in I \}$ is regular
(which holds when, for example, there is a regular relation $H\subseteq S \times S \times S$ such that $(v,u) \in F_s$ iff $(s,v,u) \in H$).
By Theorem~\ref{thm:uniformity-proof-rule}, checking uniformity of $\mathcal{P}$ amounts to finding a bisimulation relation $R$ over $\rev{\struct}$ satisfying $E \subseteq R$ and $R \cap (I \times I) = \{ (s,s) : s \in I \}$.
Since $\struct^{-1}$ is effectively regular when $\struct$ is regular,
for a regular probabilistic program $\mathcal{P}$, we can use Theorem~\ref{thm:synthesis} to check whether a regular relation $R$ is a proof for the uniformity of $\mathcal{P}$.

\NEW{Some remarks are in order. In our analysis of uniformity thus far, we have implicitly assumed that the final configurations are reachable with probability~1.}
Indeed, uniformity holds trivially for unreachable configurations since these configurations are ``reached'' with the same zero probability. In software model checking, it is common to break down the verification of a correctness property into separate verification tasks for partial correctness and termination \cite{ahrendt2016deductive,barthe2017proving,barthe2020foundations}.
Similarly, for assertions like ``the final configurations are reached uniformly at random,'' a uniformity proof only provides a partial correctness guarantee. We need to additionally check \emph{almost-sure termination} to establish total correctness.
Almost-sure termination of probabilistic programs can be verified using a wide array of automated techniques in the literature (e.g., \cite{lin2016liveness,esparza2012proving,abate2021learning,barthe2020foundations}) and is not the focus of this work.
\subsection{Extensions}
\label{sec:extensions}
We briefly discuss how to extend our previously introduced formalism for checking probability equivalence and handling models with parametric transition probabilities.
\paragraph{Probability equivalence}
Although we have assumed that a probabilistic program can reach the relevant final configurations with probability 1, this assumption is not essential. In fact, for a probabilistic program $(\struct, I, \{F_s\}_{s\in I})$, the uniformity properties we verify are \emph{conditional}: the reachability distribution over $F_s$ is uniform whenever the program reaches $F_s$ from the initial configuration $s \in I$. We can exploit this fact to capture and verify a property closely related to uniformity, called \emph{probability equivalence}.

More precisely, fix an initial configuration $s_0 \in I$ and an index set $J$. Given a set $S_j$ of final configurations for some $j \in J$, let $\mathcal P_j$ be the set of paths $\pi$ from $s_0$ to $S_j$. Define $\mathcal P \coloneqq \bigcup_{j \in J} \mathcal P_j$. Let $X : \mathcal P \to J$ be a random variable such that $X(\pi) = j$ indicates $\pi \in \mathcal P_j$.
Then the events $\{\mathcal P_j\}_{j\in J}$ have the same probability if and only if the conditional probability $\Pr(X \mid \mathcal P)$ yields a uniform distribution over $J$ on program termination, which can be directly checked within our framework by augmenting $\struct$. We will demonstrate how this extension applies through probability equivalence checks in random walks, random sums, and the ballot theorem.
\paragraph{Parametric probabilities}
Fix $\struct \coloneqq \WTS$. For $\struct$ to be regular, the addition operator $+$ must be regular-presentable.
It suffices to define the axioms of addition, such as associativity, commutativity, and identity element, in the first-order theory of $\struct$.
Based on this observation, we may generalize our proof rule of bisimulation to support (universal) parametric probabilities as follows.
We first extend $\struct$ with a regular set $Q$ of symbols to represent parametric probabilities, along with a new addition operator $+$ defined over $Q \cup P$.
We then encode the axioms of addition in $\FO{\struct}$ such that two probability masses are equal iff they are equal for \emph{any} feasible instantiation of the parametric probabilities.
Such encoding is possible as $\struct$ is bounded branching, e.g., when the branching is bounded by $2$, we can define the commutativity axiom by
$$
\forall x_1.\forall x_2.(x_1 \in (Q\cup P) \wedge x_2 \in (Q\cup P)) \Rightarrow (x_1+x_2=x_2+x_1).
$$
In this case, given a bisimulation $R$, if ${u \xrightarrow{1+q}_a E}$ holds for an equivalence class $E \in S/R$ and a parametric probability $q \in Q$, then $(u,v) \in R$ implies that $v \xrightarrow{1}_a t$ and $v \xrightarrow{q}_a t'$ hold for some $t,t' \in E$.
We illustrate this extension in our evaluation by proving the anonymity of the crowds protocol.
\section{Case studies}\label{sec:case-study}
\newsavebox{\mybox}
\begin{lrbox}{\mybox}
    \begin{minipage}{0.27\textwidth}
        \centering
        \begin{lstlisting}[mathescape=true]
def DCP(bool b1, b2, bv[N])
  bv[0] = bv[0] $\oplus$ b1 $\oplus$ b2
  bv[1] = bv[1] $\oplus$ b1
  bv[N-1] = bv[N-1] $\oplus$ b2
  for i in 2 ... N-1
    int b = coin()
    bv[i] = bv[i] $\oplus$ b
    bv[i-1] = bv[i-1] $\oplus$ b
  print bv
        \end{lstlisting}
    \end{minipage}
    \begin{minipage}{0.23\textwidth}
        \centering
        \begin{lstlisting}[mathescape=true]
def RandWalk(int N)
  int pos = N
  while 0 < pos < 2N
    int d = coin()
    pos = pos - 2d + 1
  print pos
       \end{lstlisting}
       \begin{lstlisting}[mathescape=true]
def RandSum(int N)
  int sum = 0
  int p = 0
  for i in 1...2N
    p = coin()
    sum = sum + p
  print sum
       \end{lstlisting}
    \end{minipage}
    \newline
    \begin{minipage}{0.2\textwidth}
        \centering
        \begin{lstlisting}[mathescape=true]
def NaiveRNG(int N)
  int x = 0
  int y = 1
  while true
    x = 2x + coin()
    y = 2y
    if y $\ge$ N
      if x < N
        break
      else
        x = 0
        y = 1
  print x
        \end{lstlisting}
    \end{minipage}
    \begin{minipage}{0.23\textwidth}
        \centering
        \begin{lstlisting}[mathescape=true]
def KnuthYaoRNG(int N)
  int x = 0
  int y = 1
  while true
    x = 2x + coin()
    y = 2y
    if y $\ge$ N
      if x < N
        break
      else
        x = x - N
        y = y - N
  print x
        \end{lstlisting}
    \end{minipage}
    \begin{minipage}{0.2\textwidth}
        \centering
        \begin{lstlisting}[mathescape=true]
def Ballot(int N)
  int a = 0
  int b = 0
  int vote = coin()
  int first = vote
  bool tied = false
  for i = 1 ... N
    if vote = 1
      a = a + 1
    else
      b = b + 1
    if a = b
      tied = true
    vote = coin()
  print (tied, first)
        \end{lstlisting}
    \end{minipage}
\end{lrbox}
\begin{figure*}[ht]
\centering
    \scalebox{0.85}{\usebox{\mybox}}
    \caption{Example probabilistic programs for verifying probability uniformity and equivalence.
    We use $\oplus$ to denote the XOR Boolean operator, and use \textsf{N} to represent an unbounded natural number parameter.
    The function \textsf{coin()} returns 0 or 1 uniformly at random.
    }\label{fig:uniformity}
\end{figure*}

\paragraph{Dining cryptographers protocol}
As we introduced in Section \ref{sec:intro}, this protocol anonymously computes the parity of the participants' secret bits.
We model it as an MDP, allowing the adversary to choose the random bits used by the observing participant $k$.
A configuration is encoded as $(s,w)$ such that $s$ is a control state, $w \in \{0, 1\}^n$ is a bit-vector, and $n$ is the participant number.
At an initial configuration, each $w[i]$ represents the secret $x_{k+i}$ held by participant $k+i$.
Thus, the interpretation of $w$ depends on who is observing.
The system has three types of transitions:
the observer choosing head or tail (via actions $\mathsf{H}$ or $\mathsf{T}$);
a non-observer tossing head or tail with probability 0.5 (both via action $\mathsf{X}$);
a participant announcing zero or one (via actions $\zero$ or $\one$).
The random bits computed by the observer are visible as actions $\mathsf{H}$ and $\mathsf{T}$,
while the random bits computed by the other participants are hidden as the dummy action $\mathsf{X}$.

Starting from an initial configuration $(s_I,w)$ with $|w| = n$, a maximal trace consists of $n$ update followed by $n$ announcements. Let $b_i$ denote the random bit computed by participant $k+i$.
For $i \in \NatInt{0}{n-1}$, the $i$-th update changes the value of $w[j]$ to $w[j] \oplus b_{i}$ for $j \in \{i, {i+1}\}$.
A configuration $(s',w')$ reached from $(s,w)$ after $n$ updates would satisfy $w'[i] = x_{k+i} \oplus b_{i} \oplus b_{i-1}$ for $i \in \NatInt{0}{n-1}$.
The trace then ``prints out'' $w'$ by going through $n$ announcements via actions
$a_0, \dots, a_{n-1}$, where $a_i$ is $\one$ if $w'[i] = 1$ and $a_i$ is $\zero$ if $w'[i] = 0$.
Here, $a_i$ is interpreted as the announcement made by participant $k+i$.
To prove anonymity, we define a reference system
where the announcements in a maximal trace starting from an initial configuration $(s,w)$
are uniformly distributed over
$\{ (a_0, \dots, a_{n-1}) : f(\overline{a}) = f(w),
~a_0 = w[0] \oplus b_0 \oplus b_{n-1}\}$.
In this way, the distribution of the announcements is independent of
the initial configuration once the values of
$x_k$, $b_0$, $b_{n-1}$, and $f(\overline{a}) \coloneqq a_0 \oplus \cdots \oplus a_{n-1}$
(i.e., the information observed by participant $k$) are fixed.
We then compute a bisimulation between the original system and the reference system and establish the desired anonymity.

In our evaluation, we also examine a generalized version of dining cryptographers where the secret messages $x_0,
\dots, x_{n-1}$ are bit-vectors of a parameterized size $m \ge 1$.
Unfortunately, the set $I$ of initial configurations is not regular in this setting.
To construct a regular model, we allow a configuration to encode secret messages of different sizes. We then devise the transition system such that an initial configuration $(s,w)$ can properly complete the protocol
(i.e., it yields a trace containing the $n$ announcements $a_0, \ldots, a_{n-1}$)
if and only if the messages encoded in $w$ have the same size.
The resulting MDP $\struct'$ overapproximates the MDP $\struct$ of the generalized protocol, as the traces of the former subsume those of the latter. Thus, we can verify $\struct'$ to establish the anonymity of $\struct$.
\paragraph{Crowds protocol}
The crowds protocol \cite{reiter1998crowds} enhances anonymous data transmission by randomly routing a message within a group of users before it reaches the true destination. When a user wants to send a message to another user, she transmits it to a random member of the group. If the receiver is corrupt, he will disclose the sender's identity. Otherwise, he will forward the message to the final destination with probability $p$, and to another random member with probability $1-p$, extending the route for one more step.

This protocol offers various levels of anonymity guarantees, many of which rely on the following fact: all potential senders are equally likely to be exposed by a corrupt user after the first forwarding.
We verify this fact as an anonymity property of an MDP.
Each configuration is encoded as $(s,x,y,n)$, where $s$ is a control state, $n$ is the crowd size, $x \in \NatInt{1}{n}$ is the user currently holding the message, and $y \in \NatInt{1}{n}$ is the original sender. The initial configurations are of the form $(s_I,i,i,n)$ for $1 \le i \le n$. The routing process operates in rounds. In each round, the MDP randomly selects a user $z \in \NatInt{1}{n}$ to receive the message from $x$.
Then, an adversary determines whether the receiver $z$ is corrupt. If the receiver is corrupt, the MDP emits action $\mathsf{T}$ when $x=y$, indicating that the original sender is revealed, and emits action $\mathsf{F}$ when $x\neq y$, indicating that a false sender is revealed. If the receiver is not corrupt, he forwards the message to the final destination with probability $p$, and proceeds to the next round with probability $1-p$ and a dummy action $\mathsf{X}$. A trace extends indefinitely until the message is sent to a corrupt user or to the final destination.
We verify that all initial configurations in the same crowd are bisimilar, meaning that all potential senders are equally likely to be exposed by an external observer.
\paragraph{Grades protocol}
The grades protocol \citecf{apex,Kiefer2013} is a multi-party computation algorithm aiming to securely compute the sum of the secrets held by the participants.
The setting of the protocol is pretty similar to that of the dining cryptographers:
given two parameters $n\ge 3$ and $g \ge 2$,
we have $n$ participants where each participant $i$ holds an \emph{integer} secret $x_i \in \NatInt{0}{g-1}$.
The goal is to compute $x_0 + \cdots + x_{n-1}$ without revealing information about the individual secrets.
Define $m \coloneqq (g-1) \cdot n + 1$.
The protocol has two steps:
(i) Each two adjacent participants $i$ and $i+1$ compute a random number $y_i \in \NatInt{0}{m-1}$;
(ii) Each participant $i$ announces $a_i \coloneqq (x_i + y_i - y_{i-1})~{\rm mod}~m$ to the other participants.
Thus, the participants can compute $h(\overline{a}) \coloneqq a_0 + \cdots + a_{n-1}~{\rm mod}~m$,
which is equal to $x_0 + \cdots + x_{n-1}$ as the $y_i$'s are canceled out due to the modulo.
The anonymity property asserts that no participant can infer the secrets held by the other participants from the observed information.
We consider a variant of grades protocol where $m = 2^k \ge (g-1) \cdot n + 1$ with parameters $k$, $n$, and $g$.
Clearly, the original protocol's anonymity and correctness properties also hold for this variant.
Since $m$ is a power of 2, the MDP of this protocol is similar to the one we constructed for the generalized dining cryptographers, except that the xor operations are replaced with additions and negations when appropriate.
A reference system is specified such that
the announcements observed by participant $k$ are uniformly distributed over
$\{ (a_0,\ldots,a_{n-1}) : h(\overline{a}) = h(\overline{x}),~a_0 = x_k + y_0 + y_{n-1}$ mod $m\}$.
We then establish the anonymity property by computing a bisimulation between
the original system and the reference system.
\paragraph{Dining cryptographers protocol (version 2)}
In this example, we verify the anonymity of the dining cryptographers by formulating it as a probabilistic program. Let $x_0,\ldots, x_{n-1}$ denote the secret bits and $b_0,\ldots,b_{n-1}$ denote the random bits computed during execution, such that the announcement made by participant $i$ is $a_i \coloneqq x_i \oplus b_{i} \oplus b_{i-1}$.
To show that an observing participant $k$ cannot infer information beyond $x_k$, $b_k$, $b_{k-1}$, and $f(\overline{x}) \coloneqq x_0 \oplus \cdots \oplus x_{n-1}$, it suffices to check that the random vector $(a_0, \dots, a_{n-1})$ is uniformly distributed over $\{ \overline{a} \in \{0,1\}^n : f(\overline{a}) = f(\overline{x}),~a_k = x_k \oplus b_k \oplus b_{k-1}\}$. Based on this, we model the protocol as the program $\mathsf{DCP}$ in Figure~\ref{fig:uniformity}. The program has arguments $b_1, b_2 \in \{0,1\}$ and $bv \in \{0,1\}^n$, whose semantics depends on who the observer is:
when participant $k$ is observing, $b_1$ and $b_2$ are interpreted to be the values of the random bits $b_{k}$ and $b_{k-1}$, respectively, while $bv[i]$ is interpreted to be the value of $x_{k+i}$ for $i \in \NatInt{0}{n-1}$,
where the indices are computed modulo $n$ as before.
The program $\mathsf{DCP}$ computes the announcements $a_{k}, \ldots, a_{k+n-1}$ in order, and stores the result in ${bv}[0],\ldots,{bv}[n-1]$.
We specify this probabilistic program as $(\struct, I, \{F_s\}_{s\in I})$, where each configuration $(s,\args{y},\args{x})$ consists of a program location $s$, the valuation $\args{y}$ of $bv$, and the secret values $\args{x}$.
We define $I = \{(s_I, \args{x}, \args{x}) : \args{x} \in \{0,1\}^n,~n \ge 3 \}$
and $F_{(s_I,\args{x},\args{x})} = \{ (s_F,\args{y},\args{x}) : \args{y} \in \{0,1\}^n,~f(\args{y}) = f(\args{x}),~y_k = x_k \oplus b_1 \oplus b_2 \}$,
where $k$ is the observer, $s_I$ is the initial program location, and $s_F$ is a final program location.
We verify the uniformity of this program, which suffices to establish the anonymity of the dining cryptographers protocol.
\paragraph{Random walk and random sum}
We examine the probability equivalence properties of two probabilistic programs $\mathsf{RandWalk}$ and $\mathsf{RandSum}$, see Figure~\ref{fig:uniformity}. In $\mathsf{RandWalk}$, the walker starts from a position $n\ge 1$ and keeps moving leftward or rightward with equal probability. We check that the walker reaches positions $0$ and $2n$ equally likely. The probabilistic program $(\struct, I, \{F_s\}_{s\in I})$ has $I \coloneqq \{ (s_I, n, n) : n \ge 1 \}$ and $F_{(s_I,n,n)} \coloneqq \{(s_F,0,n),(s_F,2n,n)\}$, with $s_I$ denoting the initial program location and $s_F$ denoting a final program location.
In $\mathsf{RandSum}$, the program computes the sum of $2n$ random bits, and we check that
$\Pr(sum = k) = \Pr(sum = 2n-k)$
holds for $k \in \NatInt{0}{n}$ on termination. The probabilistic program $(\struct, I, \{F_s\}_{s\in I})$ has $I \coloneqq \{ (s_I,0,n) : n \ge 1 \}$ and $F_{(s_I,0,n)} \coloneqq \bigcup_{0\le k \le n} E_{k,n} \coloneqq \bigcup_{0\le k \le n} \{(s_F,k,n),(s_F,2n-k,n)\}$ for $n \ge 1$.
We verify that the reachability probability on each $E_{k,n}$ is uniform.
In both programs, uniformity implies the desired probability equivalence property by construction.
\paragraph{Random number generation}
It is well-known that when $n$ is a power of 2, one can sample a random number $x$ from $\{0,\ldots,n-1\}$ using $\lg n$ random bits.
For general $n \ge 1$, we may compute $x$ by repeatedly sampling it from $\{0,\ldots,2^{\lceil \lg n \rceil}-1\}$ until $x \le n-1$,
at which point $x$'s value is uniformly distributed over $\{0,\ldots,n-1\}$.
This procedure terminates with probability 1 and uses $2\lceil \lg n \rceil$ random bits on average.
Knuth and Yao \cite{knuth1976complexity} improved the procedure, obtaining a version using $\lceil \lg n \rceil + \Theta(1)$ random bits on average. We model these two random number generators as probabilistic programs $\mathsf{NaiveRNG}$ and $\mathsf{KnuthYaoRNG}$, respectively.
Both programs have initial states $I \coloneqq \{ q_n : n \ge 1\} \coloneqq \{(s_I,0,1,n) : n \ge 1\}$ and final states $F_{q_n} \coloneqq \{(s_F,x,y,n) : 0\le x<n,~y\ge n\}$ for each $n\ge 1$.
We verify the uniformity of these programs, proving that the algorithms indeed compute a uniform random variable over $\{0,\ldots,n-1\}$ on termination.
\paragraph{Ballot theorem}
We consider a lemma for proving the ballot theorem \cite{albarghouthi2018constraint}, which concerns the vote counting process for two candidates. Let $n_A$ and $n_B$ be parameters denoting the votes received by candidates $A$ and $B$, respectively.
Suppose $n_A > n_B$ and the votes are counted in a uniformly random sequence. Then, the theorem asserts that the probability of $A$ maintaining a lead throughout the entire counting process is $(n_A-n_B)/(n_A+n_B)$.
One technique, often known as the reflection principle, to prove this theorem involves showing that the probability of counting the first vote for $A$ and subsequently reaching a tie is equal to the probability of counting the first vote for $B$ and then reaching a tie. We design a probabilistic program $(\struct, I, \{F_{s_n}\}_{s_n\in I})$
to simulate the vote counting process.
A program configuration $(s, a, b, f, t, n)$ records the program location $s$ and maintains the current votes $a\ge 0$ for $A$,
the current votes $b \ge 0$ for $B$, $t\in \{\bot,\top\}$ indicating whether a tie has occurred, $f\in\{\bot,\top\}$ indicating whether the first vote is for candidate $A$, and the total number of votes $n$.
We specify $I \coloneqq \{ q_n : n \ge 1\} = \{(s_I,0,0,\bot,\bot,n) : n \ge 1\}$ and $F_{q_n} \coloneqq \{(s_F, a, b, \top, t, n) : a+b=n \}$.
Our program simulates vote counting by drawing uniform samples for the votes \cite{albarghouthi2018constraint}.
Proving the reflection principle then amounts to showing that, starting from each initial configuration $q_n \in I$, the program reaches $F_{q_n}$ with uniform probability.

\section{Evaluation}
\label{sec:bisimulation-experiements}
To evaluate our approach, we developed a prototype tool in Scala. This tool can process parameterized systems specified in several languages, including $\UNIV$, WS1S, and Armoise \citep{Armoise}. It utilizes toolkits from \textsc{Mona} \citep{klarlund2001mona} and \textsc{TaPAS} \citep{leroux2009tapas} to generate automata representations of WTSs and verification conditions for probabilistic bisimulations.
Our tool is built around the active learning procedure described in Section~\ref{sec:learning}.
Specifically, membership queries are resolved by computing bisimulations for finite-state system instances, while equivalence queries leverage \textsc{Mona} to verify hypothesis automata and generate counterexamples when necessary.
Candidate bisimulations are restricted to binary relations over configuration pairs with valid encoding to streamline proof inference. Below, we discuss the effectiveness, performance, and limitations of our approach based on experiments conducted on a Windows laptop with a 2.3GHz Intel i7-11800H processor and 16GB memory limit.

\NEW{
\paragraph{Effectiveness}
The first two tables in Table~\ref{tab:experiment} summarize the examples in Section~\ref{sec:case-study}, presenting the sizes of synthesized proofs (states \#S and transitions \#T) and the runtimes of \textsc{Mona} verification (Mona), bisimulation computation (Bisim), learner processing (Lstar), and tool execution (Total).
These results demonstrate that our tool effectively identifies regular proofs for all examined examples.
They also highlight proof complexity and bisimulation computation as the primary performance bottlenecks. Indeed, as proof complexity grows, the search space expands, leading to longer verification times. Also, some examples compute bisimulations over large system instances before convergence, even though the final proof sizes remain relatively small (e.g., the DCP multi-bit example).}
\begin{table}[t]
\centering
\caption{Experimental Results}
\label{tab:experiment}
\begin{adjustbox}{width=.95\columnwidth}
\begin{tabular}{|l||c|c||c|c|c|c|}
    \hline
    Anonymity Examples & \#S & \#T & Mona & Bisim & Lstar & Total\tabularnewline
    \hline
    \hline
    DCP, single-bit & 13 & 832 & 1.5s & 2.9s & 0.1s & 5s \tabularnewline
    \hline
    DCP, multi-bit & 16 & 1024 & 1.6s & 18s & 0.2s & 20s \tabularnewline
    \hline
    Crowds Protocol & 20 & 1280 & 2.0s & 0.5s & 0.8s & 4s \tabularnewline
    \hline
    Grades Protocol & 25 & 1600 & 3.1s & 23s & 0.1s & 27s \tabularnewline
    \hline
\end{tabular}
\end{adjustbox}
\\\vspace{1em}
\begin{adjustbox}{width=.95\columnwidth}
    \begin{tabular}{|l||c|c||c|c|c|c|}
    \hline
    Uniformity Examples & \#S & \#T & Mona & Bisim & Lstar & Total\tabularnewline
    \hline
    \hline
    Dining Cryptographers & 4 & 256 & 0.4s & 0.3s & 0.5s & 2s \tabularnewline
    \hline
    Random Walk & 6 & 96 & 0.3s & 0.6s & 0.3s & 2s \tabularnewline
    \hline
    Random Sum & 7 & 448 & 0.8s & 1.1s & 1.2s & 3s \tabularnewline
    \hline
    Knuth-Yao RNG & 13 & 832 & 1.6s & 0.5s & 0.8s & 3s \tabularnewline
    \hline
    Naive RNG & 21 & 1344 & 2.3s & 1.4s & 0.6s & 5s \tabularnewline
    \hline
    Ballot Theorem & 52 & 3328 & 14s & 59s & 4.2s & 78s \tabularnewline
    \hline
\end{tabular}
\end{adjustbox}
\\\vspace{1em}
\centering
\begin{adjustbox}{width=.96\columnwidth}
\begin{tabular}{|l||c|c||c|c|c|c|}
    \hline
    Random Walk Settings & \#S & \#T & Mona & Bisim & Lstar & Total\tabularnewline
    \hline
    \hline
    $d=1,~k=10$ & 6 & 96 & 0.5s & 1.8s & 0.1s & 3s \tabularnewline
    \hline
    $d=1,~k=50$ & 6 & 96 & 0.5s & 53s & 0.3s & 54s \tabularnewline
    \hline
    $d=1,~k=100$ & 6 & 96 & 0.5s & 529s & 0.4s & 531s \tabularnewline
    \hline
    $d=1,~k=150$ & 6 & 96 & 0.5s & 4863s & 0.6s & 4865s \tabularnewline
    \hline
    $d=2,~k=1$ & 32 & 2048 & 8.3s & 28s & 4.5s & 42s \tabularnewline
    \hline
    $d=3,~k=1$ & 200 & 51200 & 389s & 2685s & 1167s & 4256s \tabularnewline
    \hline
\end{tabular}
\end{adjustbox}
\end{table}

\NEW{
\paragraph{Performance}
To further investigate the performance factors of the learning algorithm, we consider a random walk example within $\{ \overline{x}\in \mathbb{Z}^d : \forall i.\:|x_i|\le n \wedge k \le |x_i| \}$, where $n$ is a parameter, and $k$ and $d$ are constants.
This example can be formulated as a parameterized system $\{P_{n}\}_{n\ge k}$.
We verify that, starting from the origin, the walker reaches the $2^d$ corner points $p \in \{-n,n\}^d$ with equal probability. The third table in Table~\ref{tab:experiment} outlines our tool's performance under different $k$ and $d$.}
\NEW{
By specifying a larger $k$ in the parameterized system, we require the learner to infer proofs based on bisimulations over larger system instances, amplifying the computational burden even though the inferred proofs remain unchanged. Also, as the dimension $d$ increases, proof complexity escalates and impacts overall performance.
These results suggest that the learning efficiency largely depends on the complexity of the bisimulations and candidate solutions navigated by the learner before it arrives at the final answer.
}

\NEW{
\paragraph{Limitations}
Although our learning-based approach successfully verifies all the examined examples, it is important to note that the general problem remains undecidable.
Indeed, compared to explicit enumeration (cf.~Theorem \ref{thm:synthesis}), the learning algorithm prioritizes fast convergence over completeness, which means it may fail to find a regular proof even when one exists.
In such cases, one could consider combining our learning approach with an enumerative method like solver-based synthesis \cite{heule2013software}.}
It remains an open question to characterize a natural class of parameterized systems for which our learning algorithm is complete.
\section{Related work}
\label{sec:related-work}
Our verification framework can be construed as a parameterized variant of probabilistic model checking for anonymity and uniformity.
We discuss below the most pertinent literature.
\paragraph{Anonymity}
Formal verification of anonymous and secure communication protocols primarily relies on two methodologies: theorem proving and automated verification.
While theorem proving methods \cite{mciver2019thousand,barthe2009formal,avanzini2024quantitative} are capable of handling complex properties and systems, these methods often demand nontrivial manual effort and domain expertise when the system induces an unbounded model, which poses an obstacle to automation.
Automated verification methods offer various approaches to reason about anonymity.
Model checkers based on epistemic logic \citeeg{al2011abstraction,lomuscio2017mcmas,knapik2010parametric}
and process algebras \citeeg{cheval2014apte,tiu2010automating}
can analyze qualitative anonymity properties by abstracting randomness into nondeterminism and applying techniques like symbolic model checking \citecf{cortier2011survey}.
Probabilistic model checkers, including \textsc{Mcsta} \citep{hartmanns2014modest}, \textsc{Prism} \citep{shmatikov2004probabilistic,kwiatkowska2011prism}, and \textsc{Storm} \citep{hensel2022probabilistic}, can express quantitative anonymity properties in probabilistic temporal logic and analyze them through \NEW{exact computation or numerical approximation \cite{rutten2004mathematical}.}
Equivalence checkers \citeeg{apex,Kiefer2013,bauer2018model,cheval2022symbolic} formulate anonymity as indistinguishability of system executions, thereby reducing the verification tasks to solving language or trace equivalence problems in probabilistic systems.
Despite extensive tool support, most existing equivalence and model checkers are only capable of verifying our case studies in the \emph{finite} setting.
To the best of our knowledge, this work provides the first fully automated approach that can verify them in the parameterized setting.
\paragraph{Uniformity}
Uniformity verification is a specialized form of relational verification for probabilistic programs
\cite{barthe2020foundations}.
For finite-state programs, various tools \citeeg{kwiatkowska2011prism,hensel2022probabilistic} can be employed to model check uniformity properties.
For infinite-state programs, Barthe et al.~\cite{barthe2017proving} have extended the probabilistic program logic pRHL \citep{barthe2009formal,avanzini2024quantitative}, initially designed for relational properties, to reason about uniformity using \emph{coupling}.
Albarghouthi and Hsu \cite{albarghouthi2018constraint} further exploited program synthesis techniques to construct coupling proofs. Their work is, to the best of our knowledge, the only fully automated approach for infinite-state uniformity verification aside from our method. Notably, our tool successfully verifies all uniformity examples considered in \cite{albarghouthi2018constraint}.
A coupling argument aims to show a one-to-one correspondence between relevant execution paths, while bisimulation establishes equivalence between path probabilities. One-to-one correspondence is a stronger condition for probability equivalence by proving uniformity modulo permutation of paths. In comparison, bisimulation proves uniformity modulo summation of path probabilities.
Figure~\ref{fig:coupling} presents a toy example whose uniformity is trivial for bisimulation proofs but not directly amenable to coupling proofs \cite{albarghouthi2018constraint}.
On the other hand, coupling arguments can establish probability independence through self-composition \citep{barthe2017proving}. It remains unclear whether probability independence is provable by bisimulation when the output distribution is non-uniform.
Symbolic inference \citeeg{gehr2016psi,cusumano2018incremental,susag2022symbolic} provides automated methods to answer symbolic queries about distributions induced by probabilistic programs.
Though it is possible to encode uniformity queries in such methods, existing formalisms of symbolic inference \citecf{barthe2020foundations}
fall short in specifying parameterized systems like those considered by this work.
\begin{figure}
\centering
\scalebox{0.6}{
\begin{tikzpicture}[->, thick, >=latex, node distance=2.2cm, auto, bend angle=25, shorten >=1pt, shorten <=1pt]
  \node[state,initial]            (S)                {$s$};
  \node[state]            (T) [above right=1.2cm and 3cm of S] {};
  \node[state]            (M) [below right=.3cm and .8cm of T]   {};
  \node[state,accepting, accepting distance=6mm]  (R1) [right=2.6cm of T]    {$q_{1}$};
  \node[state,accepting]  (R2) [right=1cm of R1]    {$q_{2}$};
  \node[state]            (B) [below=.85cm of R1,xshift=-.5cm] {};
  \path[->]
    (S) edge[bend left]          node[above] {$1/3$}  (T)
    (S) edge                     node[below] {$1/3$}  (M)
    (S) edge[bend right]         node[above,pos=.7,yshift=-.05cm] {$1/3$} (B)
    (T) edge                     node[pos=0.3,xshift=-.4cm] {$1/3$} (S)
    (T) edge                     node[pos=0.5,yshift=-.1cm] {$2/3$} (R1)
    (M) edge[bend left]          node[below,xshift=.2cm,yshift=.05cm] {$2/3$} (S)
    (M) edge                     node[below,xshift=.1cm] {$1/3$} (R1)
    (B) edge                     node[yshift=-.1cm] {$1$} (R2);
\end{tikzpicture}}
\caption{A Markov chain with uniform output distribution over $F_s = \{q_1, q_2\}$}
\label{fig:coupling}
\end{figure}
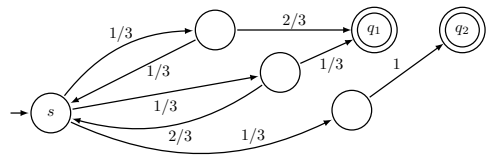
\paragraph{Bisimulation}
Our method leverages bisimulation to reason about anonymity and uniformity.
The concept of bisimulation we consider is also referred to as \emph{strong bisimulation}, and the corresponding behavioral equivalence is termed \emph{equivalence modulo strong bisimilarity}
\citecf{milner1989communication,hennessy2012exploring}.
In the literature, other types of behavioral equivalence have been studied for probabilistic systems,
including {weak bisimilarity} \citeeg{baier1997weak,philippou2000weak}
{branching bisimilarity} \citeeg{andova2006branching},
and {$\varepsilon$-bisimilarity} \citecf{spork2024spectrum},
with most investigations focusing on finite-state systems.
\NEW{For infinite-state systems, various decidable classes have been identified for strong bisimilarity, including several types of process rewrite systems and pushdown systems in both probabilistic and non-probabilistic settings \cite{srba2004roadmap,aceto2012algorithmics,forejt2018game}.
Despite these theoretical advances, few of the results have been adapted into practical verification tools \cite{garavel2022equivalence}.}
Interestingly, Forejt et al.~\cite{forejt2018game} showed that strong bisimilarity on probabilistic systems
can be reduced to strong bisimilarity on nondeterministic LTSs.
Thus, specialized proof rules are not essential for verifying probabilistic bisimulations.
Unfortunately, their reduction does not preserve regular system encoding, rendering it incompatible with our regular verification framework.
In a recent study, Abate et al.~\cite{abate2024bisimulation} proposed a data-driven approach to synthesize bisimulations for infinite-state LTSs.
Their approach utilizes SMT solvers to generate candidate bisimulations in a learner-verifier architecture
conceptually similar to our learner-teacher framework.
Nevertheless, their method is limited to learning bisimulation relations with finitely many equivalence classes,
which is often insufficient for parameterized systems.

\section{Conclusion}
This paper introduces a first-order framework for checking strong bisimulation equivalence in infinite-state probabilistic systems, with applications to anonymity and uniformity verification.
Our approach requires that (i) the examined system has an effective regular presentation, (ii) the system is bounded branching, and (iii) the system is weakly finite, which holds naturally for parameterized systems.
We show that, while the general verification problem is undecidable, our framework can effectively encode and automatically verify challenging examples from the literature.

Future research could explore generalizations to weaker versions of bisimulation equivalence like $\varepsilon$-bisimilarity \citecf{spork2024spectrum} and bisimulation metrics \citecf{van2005behavioural}, which tolerate slight deviations when comparing system behaviors. Such relaxation is particularly relevant for verifying cryptographic protocols, since most practical protocols do not achieve perfect secrecy but are still sufficiently secure. Another interesting direction is to enhance the expressiveness of our framework by utilizing the recent development of \emph{regular abstraction} \citeeg{hong2024regular,czerner2024computing}, which allows for specifying and reasoning about regular structures in background SMT theories. Finally, it is possible to improve our framework's capabilities to handle more complex systems by incorporating probabilistic model checkers and program verifiers, e.g., as oracles for bisimulation learning \cite{abate2024bisimulation}.

\linespread{.9}\selectfont
\bibliographystyle{IEEEtran}
\begingroup
\small
\bibliography{references}
\endgroup
\end{document}